\newtheorem{theorem}{Theorem}
\newtheorem{definition}{Definition}
\newtheorem{lemma}{Lemma}
\newtheorem{problem}{OP}
\newtheorem*{corollary}{Corollary}
\newtheorem*{intermediate}{Intermediate Step}
\newcommand{\eq}[1]{Eq.~\eqref{#1}}
\newcommand{\oldtext}[1]{}
\newcommand{\myitem}[1]{\vspace{0.25\baselineskip}\noindent\textbf{#1}}
\newcommand{\secref}[1]{Section~\ref{#1}}
\def\BibTeX{{\rm B\kern-.05em{\sc i\kern-.025em b}\kern-.08em T\kern-.1667em\lower.7ex\hbox{E}\kern-.125emX}}
\begin{document}

\newfloat{subroutine}{htbp}{loa}
\floatname{subroutine}{Subroutine}
\title{The Order of Things: Position-Aware Network-friendly Recommendations in Long Viewing Sessions} 

\author{
\large{
Theodoros Giannakas\textsuperscript{1}, 
Thrasyvoulos Spyropoulos\textsuperscript{1},
and Pavlos Sermpezis\textsuperscript{2}
}\\
\normalsize
\textsuperscript{1}~EURECOM,~France, first.last@eurecom.fr\\
\textsuperscript{2}~FORTH,~Greece, sermpezis@ics.forth.gr 
}

\maketitle
\thispagestyle{plain}
\pagestyle{plain}
\begin{abstract}
Caching has recently attracted a lot of attention in the wireless communications community, as a means to cope with the increasing number of users consuming web content from mobile devices. Caching offers an opportunity for a win-win scenario: nearby content can improve the video streaming experience for the user, and free up valuable network resources for the operator. At the same time, recent works have shown that recommendations of popular content apps are responsible for a significant percentage of users requests. As a result, some very recent works have considered how to nudge recommendations to facilitate the network (e.g., increase cache hit rates). In this paper, we follow up on this line of work, and consider the problem of designing cache friendly recommendations for long viewing sessions; specifically, we attempt to answer two open questions in this context: (i) given that recommendation position affects user click rates, what is the impact on the performance of such network-friendly recommender solutions? (ii) can the resulting optimization problems be solved efficiently, when considering both sequences of dependent accesses (e.g., YouTube) and position preference? To this end, we propose a stochastic model that incorporates position-aware recommendations into a Markovian traversal model of the content catalog, and derive the average cost of a user session using absorbing Markov chain theory. We then formulate the optimization problem, and after a careful sequence of equivalent transformations show that it has a linear program equivalent and thus can be solved efficiently. Finally, we use a range of real datasets we collected to investigate the impact of position preference in recommendations on the proposed optimal algorithm. Our results suggest more than 30\% improvement with respect to state-of-the-art methods.

\end{abstract}

\section{Introduction}
\label{sec:intro}
Storing content close to wireless users is recognized as a promising method to (i) reduce the network cost to serve a request, and (ii) improve user experience (e.g., better playout quality). As a result, a number of studies suggest to install tiny caches (e.g., hard drives) at every small-cell or femto-node~\cite{femto}, bringing ideas from hierarchical caching~\cite{borst2010} into the wireless domain. 

%

Nevertheless, the rapidly growing catalog 
sizes, smaller sizes per cache (e.g., at femto-nodes or user devices) compared to traditional CDNs, and volatility of user demand when considering smaller populations, make the task of caching algorithms increasingly challenging~\cite{Paschos-misconceptions,ElayoubiRoberts15}. For example, installing say 1TB in every small cell in an ultra-dense network (already a pretty expensive investment) would still fit less $1\%$ or less of the content catalogue of even one provider (e.g., the Netflix catalogue is reportedly in the order of few PBs). Things are even more stringent for UE-side caching solutions \cite{ji2016fundamental}, where it is reported that up to 10-20 files could be pre-fetched realistically~\cite{karamshuk2016take}.

To overcome such challenges, a radical approach has been recently proposed~\cite{chatzieleftheriou2017technical,cache-centric-video-recommendation,giannakas2018show,content-recommendation-swarming,sermpezis2018soft,kastanakis-cabaret-mecomm}, based on the observation that user demand is increasingly driven today by recommendation systems of popular applications (e.g., Netflix, YouTube). Instead of simply recommending \emph{interesting content}, recommendations could instead be ``nudged'' towards \emph{interesting content with low access cost} (e.g., locally cached)~\cite{chatzieleftheriou2017technical,kastanakis-cabaret-mecomm}:
the recommendation quality remains unaltered, and the new content might in fact become accessible at better quality (e.g., HD).
This idea is appealing, potentially presenting a win-win situation for all involved parties. 

Nevertheless, due to the very recent research interest in the topic, a number of key questions remain unanswered. First, it has been shown that the users have the tendency to click on recommended contents (or products in the case of e-commerce) according to the position they find them, e.g., contents higher up in the recommendation list~\cite{what-should-you-cache-nossdav,RecImpact-IMC10}. However, several of the aforementioned studies tend to ignore this aspect~\cite{giannakas2018show,sermpezis2018soft,content-recommendation-swarming} in their analysis, assuming that an equally good recommendation will be clicked equally frequently, regardless of the position in the application GUI that it appears. The work in~\cite{chatzieleftheriou2017technical}, while taking into account the ranking of the recommendations in the modeling and their proposed algorithm, in the simulation section they assume that the boosting of the items is equal. So an interesting question arising then is: \emph{Does the performance of network-friendly recommendation schemes improves, deteriorates, or is unaffected by such position preference?}

A second important question has to do with the computational complexity of optimizing network-friendly recommendations. In settings where each user requests one content (or equivalently requests many contents in an I.I.D. manner), the caching-side of the problem~\cite{sermpezis2018soft} or the recommendation-side of the problem
, can be efficiently approximated. However, the joint caching and recommendation problem is NP complete~\cite{chatzieleftheriou2017technical}, without any known approximation guarantees or optimal decompositions~\cite{chatzieleftheriou2017technical}. Things get worse, when one considers a user accessing multiple contents during a session in a \emph{structured} manner, due to the inherent memory this system has (the content recommended and/or accessed at step $n$ has an impact beyond step $n+1$). \emph{Even without position preference}, the problem of network-friendly recommendations for long (markovian) sequences of content accessed seems to be hard (non-convex)~\cite{giannakas2018show}. A second question of interest then is: \emph{Can the problem of network-friendly recommendations even be solved efficiently, in a context where there is both position preference and dependence in consecutive content requests?}

To this end, in this paper we make the following contributions towards answering the above questions:

\noindent \myitem{(i) Sequential request analysis based on absorbing Markov chain theory.}
We propose an analytical framework based on absorbing Markov chain theory, to model a user accessing a sequence 
of contents, driven by a recommender (Sections~\ref{sec:problem_setup} and~\ref{sec:problem_formulation}).
The sequential request model with preference to top recommendations better fits real users behavior in a number of popular applications
(e.g. YouTube, Vimeo, Spotify) compared to Independent Reference Models (IRM) used in previous work~\cite{sermpezis2018soft} and/or models neglecting the position of recommendations~\cite{giannakas2018show,sermpezis2018soft}.

\noindent \myitem{(ii) Optimal solution.}
We formulate a generic optimization problem for high quality but network-friendly recommendations. 
While the original problem in non-convex (similarly to previous formulations~\cite{giannakas2018show}), we prove an equivalent convex one through a sequence of transformations, which allows to solve the original problem efficiently (Section~\ref{sec:optimization_methodology}). 

\noindent \myitem{(iii) Real data analysis and performance evaluation.}
%
%
%
 We validate our algorithms using existing and collected datasets from different content catalogs (e.g., YouTube, Movielens), and demonstrate performance improvements up to 35\% compared to a state-of-the-art method, and 60\% compared to a greedy cache-friendly recommender (in terms of relative gain), for a scenario with 90\% of the original recommendation quality (Section~\ref{sec:sims}). Our findings reveal that the more skewed the preference towards top positions of recommendations is, the higher the gains of network-friendly recommendation schemes can be. 

Finally, we discuss related work in \secref{sec:related} and conclude our paper in \secref{sec:conclusions}.



\section{Problem Setup}
\label{sec:problem_setup}
\subsection{Recommendation-driven Content Consumption.} We consider a user that consumes one or more contents during a session, drawn from a catalogue $\mathcal{K}$ of cardinality $K$. It is reported that YouTube users spend on average around 40 minutes at the service, viewing several related videos~\cite{businessYoutubeSessions}. After each viewing, a user is offered some recommended items that she might follow or not, according to the  model below. 
\begin{definition}[Recommendation-Driven Requests]\label{def:requests} After a user consumes a content, $N$ contents are recommended to her (these might differ between users).
\begin{itemize}[leftmargin=*,noitemsep,topsep=0pt]
  \item \emph{with probability $1-\alpha$} ($\alpha\in[0,1]$) she ignores the recommendations, and picks a content $j$ (e.g., through a search bar) with probability $p_{j} \in (0,1)$, $\mathbf{p}_{0} = [p_{1}, p_{2}, ..., p_{K}]^{T}$.
  \item \emph{with probability} $\alpha$ she follows one of the $N$ recommendations. 
 \item each of the $N$ recommended contents is placed in one of $N$ possible slots/positions in the application GUI; if she \emph{does} follow recommendation, the \emph{conditional} probability to pick the item in position $i$ is $v_i$, where $\sum_i v_i = 1$.
\end{itemize}
\end{definition}
We assume the probabilities $p_j$ capture long-term user behavior (beyond one session), and possibly the impact of the baseline recommender. W.l.o.g. we also assume $\mathbf{p}_{0}$ governs the first content accessed, when a user starts a session. 
This model captures a number of everyday scenarios (e.g., watching clips on YouTube, personalized radio, etc). 

The last point in the definition is a key differentiator of this work, compared to some previous ones on the topic~\cite{giannakas2018show},~\cite{chatzieleftheriou2017technical},~\cite{content-recommendation-swarming}. A variety of recent studies~\cite{what-should-you-cache-nossdav,RecImpact-IMC10} has shown that the web-users have the tendency to click on contents (or products in the case of e-commerce) according to the position they find them. For example, in the PC interface of YouTube, they show a preference for the contents that are higher in the list of the recommended items. Hence, the probability of picking content in position 1 ($v_1$), might be quite higher than the probability to pick the content in position $N$ ($v_N$)\footnote{In fact, a Zipf-like relation has been observed~\cite{RecImpact-IMC10}.}. In contrast,~\cite{giannakas2018show,chatzieleftheriou2017technical,content-recommendation-swarming} explicitly or implicitly assume that $v_i = \frac{1}{N}, \forall i$.  

\emph{Remark - Position Entropy}: A key goal of this paper is to understand the additional impact of position preference on the achievable gains of network-friendly recommendations. A natural way to capture position preference is with the \emph{entropy} of the probability mass function $\mathbf{v} = [v_1,v_2,...,v_N]$, namely
\begin{equation}
   H_{\mathbf{v}} =  H(v_1,..,v_N) = -\sum_{n=1}^{N} v_n \cdot \log(v_n).
\end{equation}
The original case of no position preference, corresponds to a uniformly distributed $\mathbf{v}$, which is well known to have maximum entropy. Any position preference will lead to lower entropy, with the extreme case of a ``1-hot vector'' (i.e., only one $v_i =1$) having zero entropy. 

\myitem{Content Retrieval Cost.} We assume that fetching content $i$ is associated with a generic cost $c_{i}\in\mathbb{R}$, $\mathbf{c} = [c_{1}, c_{2},...,c_{K}]^{T}$, which is known to the content provider, and might depend on access latency, congestion overhead, or even monetary cost. 

\noindent\emph{Maximizing cache hits:} Can be captured by setting $c_i = 1$ for all cached content and to $c_i = 0$, for non-cached content. 

\noindent\emph{Hierachical caching:} Can be captured by letting $c_i$ take values out of $n$ possible ones, corresponding to $n$ cache layers: higher values correspond to layers farther from the user~\cite{poularakis2014toc,borst2010}. 


\subsection{Baseline Recommendations.} 

Recommendation systems (RS) are an active area of research, with state-of-the-art RS using collaborative filtering~\cite{sarwar2001item},
and recently deep neural networks~\cite{covington2016deep}. 
%
For simplicity, we assume that the baseline RS works as follows:
\begin{definition}[Baseline Recommendations and Matrix $\mathbf{U}$]\label{def:baseline-RS} ~

\noindent (i) For every pair of contents $i,j \in \mathcal{K}$ a score $u_{ij}\in[0,1]$ is calculated, using a state-of-the-art method. Note that these scores can be personalized, and differ between users.
\footnote{$u_{ij}$ could correspond to the \emph{cosine similarity} between content $i$ and $j$, in a collaborative filterting system~\cite{sarwar2001item}, or simply take values either $1$ (for a small number of related files) and $0$ (for unrelated ones). These scores might also depend on user preferences and past history of that user, as is often the case when users are logged into the app.}

\noindent(ii) After a user has just consumed content $i$, the RS recommends contents according to these $u_{ij}$ values (e.g.,  the $N$ contents $j$ with the highest $u_{ij}$ value~\cite{RecImpact-IMC10,covington2016deep}.\footnote{$N$ depends on the scenario. E.g., in YouTube $N=2,..,5$ in its mobile app, and $N=20$ in its website version.}
\end{definition}

\subsection{Network-friendly Recommendations.} 

Our goal is to depart from the baseline recommendations (Def.~\ref{def:baseline-RS}) that are based only on $\mathbf{U}$, and let them consider the access costs $\mathbf{c}$ as well. We define recommendation decisions as follows.

\begin{definition}[Control Variables $\mathbf{R}^{1},..,\mathbf{R}^{N}$]\label{def:control-variables}
Let $r^{n}_{ij} \in [0,1]$ denote the probability that content $j$ is recommended after a user watches content $i$ in the position $n$ of the list. For the $n$-th position in the recommendation list, these probabilities define a matrix $K \times K$ recommendation matrix, which we call $\mathbf{R}^{n}$. 
\end{definition}


Defining recommendations as probabilities provides us more flexibility, as it allows to not always show a user the same contents (after consuming some content $i$).
For example, 
assume $K=4$ total files, a user just watched item 1, and $N=2$ items must be recommended. Let the first row of the matrix $\mathbf{R}^{1}$ be
$\mathbf{r}_{1}^{1} = [0, 1, 0, 0]$ and that of $\mathbf{R}^2$ be $\mathbf{r}_{1}^{2} = [0, 0, 0.5, 0.5]$. In practice, this means that in position 1 the user will always see content 2 being recommended (after consuming content 1), and the recommendation for position 2 will half the time be for content 3 and half for content 4.

Our objective is to choose what to recommend in which position, i.e., choose $\mathbf{R}^{1},.. \mathbf{R}^{N}$, to minimize the average content access cost. However, we still need to ensure that the user remains generally happy with the quality of recommendations and does not abandon the streaming session.

\myitem{Recommendation Quality Constraint.}

Let $r^{n(B)}_{ij}$ denote the baseline recommendations of Def~.\ref{def:baseline-RS}. We can define the recommendation quality of this baseline recommender for content $i$, $q_i^{max}$ as follows
\begin{align} \label{eq:qmax}
q^{max}_{i} = \sum_{j=1}^{K} \sum_{n=1}^{N} v_n \cdot r^{(n)(B)}_{ij} \cdot u_{ij}.
\end{align}
This quantity will act as another figure of merit for other (network-friendly) RS.

\begin{definition}[Quality of Network-Friendly Recommendations]\label{def:quality-rec}
Any other (network-friendly) RS that differs from the baseline recommendations $r^{B}_{ij}$ can be assessed in terms of its recommendation quality  $q \in [0,1]$ with the constraint:
\begin{equation}\label{eq:quality}
\sum_{j=1}^{K} \sum_{n=1}^{N} v_n \cdot r^{n}_{ij} \cdot u_{ij} \ge q \cdot q^{max}_{i}, \forall i \in \mathcal{K}.
\end{equation}
where $q^{max}_{i}$ is the quantity defined in Eq.(\ref{eq:qmax}).
\end{definition}
This equation weighs each recommendation with: (a) its quality $u_{ij}$, and (b) the importance of the position $n$ it appears at, $v_n$. Note however that this constraint is not a restrictive choice. One could conceive a more ``aggresive'' recommender that removes the weight $v_n$ from the left-hand side. In fact, our framework can handle any quality constraint(s) that are convex in $r^{n}_{ij}$. 

Based on the above discussion, a network-friendly recommendation could favor at each step contents $j$ (i.e., give high $r^{n}_{ij}$ values) that have low access cost $c_j$ but also are interesting to the user (i.e., have high $u_{ij}$ value). 
However, as we show in later sections%
, such a greedy approach is suboptimal, as the impact of $r_{ij}^{n}$ goes beyond the content $j$ accessed next
, affecting the entire \emph{sample path} of subsequent contents in that session. 
The example in Fig.~\ref{fig:example-comparison-rec} depicts such a scenario: after content $3$, instead of recommending content $6$ (related value $u_{36}=1$) content $4$ is recommended ($u_{34}=0.8$), because $4$ is more related to cached contents ($9$ and $7$) that can be recommended later (whereas $6$ is related to the non-cached contents $5$ and $4$).\footnote{The reason is that many contents $j$ will have high enough relevance $u_{ij}$ to the original content $i$, and are thus interchangeable~\cite{RecImpact-IMC10}}




\begin{figure}
\centering
\subfigure{\includegraphics[width=0.8\columnwidth]{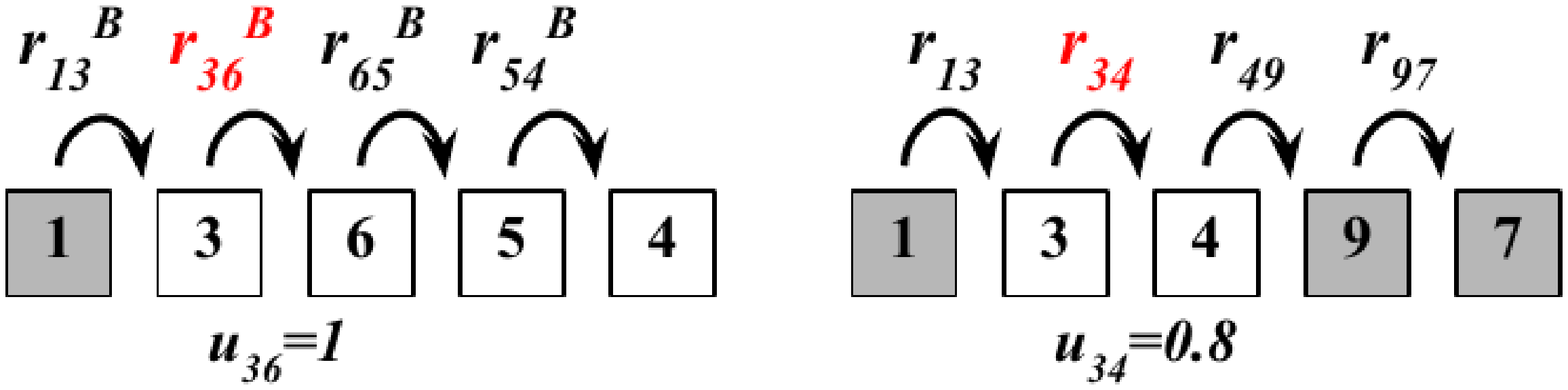}}
\vspace{-2.5mm}
\caption{Comparison of baseline (left) and network-friendly (right) recommenders. Gray and white boxes denote cached and non-cached contents, respectively. Recommending after content $3$ a slightly less similar content (i.e., content $4$ instead of $6$), leads to lower access cost in the long term.}
\label{fig:example-comparison-rec}
\end{figure}

\myitem{Remark on Recommendation Personalization.} As hinted at earlier, content utilities $u_{ij}$ and recommendations $r^{n}_{ij}$ can be user-specific (e.g. $u_{ij}^{u}$ for user $u$), since different users might have different access patterns that can be leveraged. Nevertheless, to avoid notation clutter we do not use superscript $u$ in the remainder of the paper, and will assume that these quantities and the respective optimization algorithm is done per user.

\myitem{Remark on Recommendation Quality.} Cache-friendly recommendations might also improve user QoE, in addition to network cost, a ``win-win'' situation. Today's RS, measure their performance (QoR) without taking into account where the recommended content is stored.
Assuming two contents equally interesting to the user where the one \emph{is} stored locally while the other \emph{is not}; it is obvious that the cached one could be streamed in much better quality (e.g., HD, so higher QoS), thus leading to $q > 1$. Hence, more sophisticated QoE (= QoR + QoS) metrics \emph{could} combine these effects: e.g., a content's effective utility $\hat{u}_{ij} = f(u_{ij},c_j)$ that increases if $j$ is highly related to $i$ but also if it is locally cached (i.e., $c_j$ is low). Such a metric could be immediately integrated into our framework, simply be replacing $u$ with $\hat{u}$.




Table~\ref{table:notation} summarizes some important notation. Vectors and matrices are denoted with bold symbols.

\vspace{5pt}
\begin{table}
\centering
\caption{Important Notation}\label{table:notation}
\begin{small}
\begin{tabular}{|l|l|}
\hline
{$\alpha$}			&{Prob. the user follows recommendations}\\
\hline
{$r_{ij}^{n}$}			&{Prob. to recommend $j$ after $i$ at position $n$}\\
\hline
{$q_{i}^{max}$}		&{Maximum baseline quality of content $i$}\\
\hline
{$q$}		        &{Percentage of original quality}\\
\hline
{$\mathbf{p}_0$}	&{Baseline popularity of contents}\\
\hline
{$u_{ij}$}			&{Similarity scores content pairs $\{i,j\}$, included in $\mathbf{U}$}\\
\hline
{$v_{n}$}			&{Click prob. of recommendation at the position $n$}\\
\hline
{$c_i$}			    &{Access cost for content $i$}\\
\hline
{$\mathcal{K}$}		&{Content catalogue (of cardinality $K$)}\\
\hline
{$N$}				&{Number of recommendations}\\
\hline
\end{tabular}
\end{small}
\end{table}

\section{Average Session Cost}
\label{sec:problem_formulation}

Having defined the content access model, our first step towards ``optimizing'' the (network-friendly) recommendations, is to better understand what we are trying to optimize. To this end, in this section we derive the expected content access cost for a typical user session, as a function of recommendation variables $r_{ij}^{n}$. This will serve as the \emph{objective} of our problem.  (Section~\ref{sec:optimization_methodology}).




\begin{definition}\label{lemma:process-as-a-markov-chain}
Let $S = \{i_{1},i_{2},\dots, i_{s} \}, i_{n} \in \mathcal{K}$ be a sequence of contents accessed by a user according to Def.~\ref{def:requests} during a viewing session. Then $S$ is a discrete-time Markov process with transition matrix
\begin{align}\label{eq:MC}
\mathbf{P} = \alpha \sum_{n=1}^{N} v_n  \mathbf{R}^{n} + (1-\alpha) \mathbf{1} \cdot \mathbf{p}_{0}^{T},
\end{align}
where $\mathbf{1} = [1,1,...,1]^{T}$ is a column vector of all 1s. 
\end{definition}

When the user has just consumed content $i$, then she might next consume content $j$ if all the following occur: she decides to follow a recommendation (probability $\alpha$ according to Def.~\ref{def:requests}), $j$ appears in the position $n$ (probability $r^{n}_{ij}$), and she picks the content at the $n$-th position (probability $v_n$). These probabilities are by definition independent, hence the probability of these three events is their product, $\alpha \cdot v_n \cdot r^{n}_{ij}$. Note that the user might consume $j$, if she finds it in positions other than $n$ (for example in position $m$) and will then click it with $v_m$. Moreover, the user might also consume $j$ after $i$, if she ignores the recommendations (with probability $1-\alpha$ according to Def.~\ref{def:requests}) and picks content $j$ from the entire catalog (with probability $p_j$). Putting all these together gives the transition probability from $i$ to $j$, $Pr\{i \to j\} = \alpha \cdot \sum_{n=1}^{N} v_n \cdot r^{n}_{ij} + (1-\alpha)\cdot p_j$, which written in matrix notation gives \eq{eq:MC}. 

\begin{lemma}[Content Access as Renewal-Reward]\label{lemma:RR}
A content access sequence $S = \{S_{R}^{1},S_{R}^{2},\dots\}$ defines a renewal process, with subsequences $S_{R}$, where the user follows recommended content, each ending with a jump outside of the recommender. The cost $c_{i}$ incurred at each state is the reward.
\end{lemma}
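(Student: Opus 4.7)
The plan is to identify the renewal epochs explicitly and then argue that the blocks between consecutive epochs are i.i.d. Concretely, I would define epoch $T_k$ to be the index of the $k$-th step at which the user ignores the recommender and draws from $\mathbf{p}_0$ (this happens with probability $1-\alpha$ at each step, independently of everything else, per Definition~\ref{def:requests}). The subsequence $S_R^k$ is then the block of contents consumed from step $T_k$ up to but not including step $T_{k+1}$. Renewal occurs precisely at these epochs because the state at time $T_k$ is drawn from the fixed distribution $\mathbf{p}_0$, irrespective of the current content $i$; i.e., the past is forgotten at every direct request, which is the standard regeneration condition.

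Next I would verify the two ingredients needed for a bona fide renewal process. First, the inter-renewal lengths $T_{k+1}-T_k$ are i.i.d.\ geometric with parameter $1-\alpha$, because at each step a fresh, independent $\{0,1\}$ coin with bias $\alpha$ determines whether the user follows the recommender or breaks out. Second, conditional on the length, the content sequence inside each block is generated by the sub-stochastic kernel $\alpha \sum_{n=1}^N v_n \mathbf{R}^n$ starting from an initial state drawn from $\mathbf{p}_0$; since neither the kernel nor $\mathbf{p}_0$ depends on $k$ and the coins are independent across steps, the blocks $\{S_R^k\}_{k \ge 1}$ are i.i.d.\ as random strings over $\mathcal{K}$. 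This places us exactly in the setting of a classical renewal process indexed by the epochs $\{T_k\}$.

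For the reward part, I would attach to every visited state $i$ the deterministic reward $c_i$ and define the per-cycle reward $W_k = \sum_{i \in S_R^k} c_i$. Because the cycles are i.i.d.\ and $|\mathbf{c}|$ is bounded on the finite catalogue $\mathcal{K}$, each $W_k$ has finite mean, so $(T_k, W_k)$ is a renewal-reward pair in the textbook sense; the long-run average cost per access then equals $\mathbb{E}[W_1]/\mathbb{E}[T_2-T_1]$ by the renewal-reward theorem, which is the statement the later optimisation sections will need.

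The routine parts are the independence checks; the only delicate point is the boundary behaviour, namely making sure that the ``jump outside of the recommender'' is counted consistently as ending cycle $k$ rather than starting cycle $k+1$, and that the very first cycle is well defined given the assumption in Definition~\ref{def:requests} that $\mathbf{p}_0$ also governs the first request of the session. Once that bookkeeping is fixed, the rest follows immediately from the independence of the $\alpha$-coins across steps, so I expect the main obstacle to be purely notational rather than probabilistic.
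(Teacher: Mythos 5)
Your proposal is correct and follows essentially the same reasoning as the paper, which justifies the lemma with the single observation that every jump outside the recommendations (probability $1-\alpha$, independent of the current state) resets the process to the fixed distribution $\mathbf{p}_0$, making those epochs regeneration times; you simply spell out the i.i.d.\ geometric inter-renewal lengths and the finite-mean per-cycle reward that the paper leaves implicit. The only (harmless) imprecision is the phrase ``conditional on the length, the block is generated by the sub-stochastic kernel'' --- conditioning on the length alters the within-block law --- but this does not affect the regeneration argument.
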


It is easy to see that whenever a user makes a jump outside of the recommendations (w.p.  $1-\alpha$), the process renews to state $\mathbf{p_0}$. An example can be found in Fig.~\ref{fig:example2}.
\begin{figure}
\centering
\subfigure{\includegraphics[width=0.6\columnwidth]{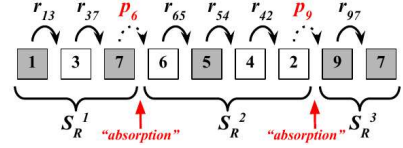}}
\vspace{-2mm}
\caption{Example of a multi-content session. Gray and white boxes denote cached and non-cached contents, respectively. A user follows recommendations (continuous arrows) or ignores them (dotted arrows).}
\label{fig:example2}
\end{figure}
%
To derive the mean access cost, we employ Lemma~\ref{lemma:RR} and the framework of Absorbing Markov Chains (AMC) \cite{grinstead2012introduction}: a user is in \emph{transient} states while she is following recommendations; and she gets \emph{absorbed} as soon as a jump outside of recommendations occurs, as shown in Fig.~\ref{fig:example2}. Hence, during a content access sequence, recommendations affect the user's choices (and related costs) only during the transient states.  
\begin{lemma}[Recommendation-Driven Cost] \label{lemma:rec-cost}
The content access cost $C(S_{R})$ during a renewal cycle $S_{R}$ is given by
\begin{equation}\label{eq:cost-cycle}
E[C(S_R)] = \mathbf{p}_0^{T} \cdot \mathbf{G} \cdot \mathbf{c},
\end{equation}
and the expected length of such a cycle is
\begin{equation}\label{eq:cycle-length}
|S_{R}| = \mathbf{p}_0^{T} \cdot \mathbf{G} \cdot \mathbf{1} = \frac{1}{1-\alpha},
\end{equation}
where $\mathbf{G} = \left(\mathbf{I}-\alpha \cdot \sum_{n=1}^{N} \cdot \mathbf{R}^{n}\right)^{-1}$ is the \emph{Fundamental Matrix} of an AMC with $K$ transient states and $1$ absorbing state, corresponding to a jump outside recommendations. 
\end{lemma}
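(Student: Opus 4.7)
The plan is to cast the renewal cycle $S_R$ of Lemma~\ref{lemma:RR} as a classical absorbing Markov chain on the $K$ contents augmented with a single absorbing state, which I will call $\star$ and which corresponds to the event ``the user ignores the recommender on this step.'' By Def.~\ref{def:requests}, at each step the user follows a recommendation with probability $\alpha$, and conditional on this event she picks position $n$ with probability $v_n$ and is then steered by $\mathbf{R}^n$; with the complementary probability $1-\alpha$ she exits the cycle and lands in $\star$. Hence the sub-stochastic transition matrix restricted to the $K$ transient states is $\mathbf{Q} = \alpha\sum_{n=1}^N v_n \mathbf{R}^n$, and the exit probability vector into $\star$ is $(1-\alpha)\mathbf{1}$. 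Within the cycle $S_R$ the initial transient state is drawn from $\mathbf{p}_0$, as a cycle begins precisely when the previous step was an out-of-recommender jump (Def.~\ref{def:requests}).

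First I would invoke the standard AMC identity (see, e.g., \cite{grinstead2012introduction}) that the fundamental matrix $\mathbf{G} = (\mathbf{I}-\mathbf{Q})^{-1}$ has entries $G_{ij}$ equal to the expected number of visits to transient state $j$ before absorption, given the chain starts at $i$. Summing the accrued reward $c_j$ over visits and conditioning on the initial state $i\sim \mathbf{p}_0$ yields
\begin{equation*}
E[C(S_R)] \;=\; \sum_{i} p_{0,i} \sum_{j} G_{ij}\, c_j \;=\; \mathbf{p}_0^T \mathbf{G}\, \mathbf{c},
\end{equation*}
which is \eqref{eq:cost-cycle}. The same derivation with reward $1$ at every state gives $|S_R| = \mathbf{p}_0^T \mathbf{G}\, \mathbf{1}$, matching the first equality of \eqref{eq:cycle-length}.

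To close the argument I need the second equality in \eqref{eq:cycle-length}, namely $\mathbf{p}_0^T \mathbf{G}\, \mathbf{1} = 1/(1-\alpha)$. The key observation is that each $\mathbf{R}^n$ is row-stochastic (it is a proper recommendation distribution), so $\mathbf{R}^n\mathbf{1}=\mathbf{1}$, and since $\sum_n v_n = 1$ we get $\mathbf{Q}\mathbf{1} = \alpha \mathbf{1}$, i.e., $(\mathbf{I}-\mathbf{Q})\mathbf{1} = (1-\alpha)\mathbf{1}$. Multiplying by $\mathbf{G}$ gives $\mathbf{G}\mathbf{1} = \tfrac{1}{1-\alpha}\mathbf{1}$, and then $\mathbf{p}_0^T \mathbf{G}\mathbf{1} = \tfrac{1}{1-\alpha}$, using $\mathbf{p}_0^T\mathbf{1}=1$. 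Intuitively this is just the fact that the cycle length is geometric with success probability $1-\alpha$, regardless of the current transient state, but deriving it from the row-sum identity makes it transparent that the result does not depend on the particular $\mathbf{R}^n$ chosen.

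The main technical obstacle I anticipate is not algebraic but definitional: I must justify that $\mathbf{I}-\mathbf{Q}$ is genuinely invertible so that $\mathbf{G}$ is well defined. This follows because $\mathbf{Q}$ has spectral radius at most $\alpha<1$ (it is $\alpha$ times a convex combination of row-stochastic matrices), so the Neumann series $\sum_{k\ge 0}\mathbf{Q}^k$ converges and equals $\mathbf{G}$; this series representation also furnishes an alternative, direct proof of \eqref{eq:cost-cycle} by summing expected rewards step-by-step over $k=0,1,2,\dots$ inside the cycle, which I would include as a sanity check. Once invertibility and the row-sum identity are in hand, the two equalities of the lemma follow immediately from the computations above.
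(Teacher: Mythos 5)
Your proof is correct and follows essentially the same route as the paper: both cast the renewal cycle as an absorbing Markov chain with transient block $\mathbf{Q}=\alpha\sum_n v_n\mathbf{R}^n$, use the fundamental matrix's visit-count interpretation to sum rewards against the initial distribution $\mathbf{p}_0$, and obtain the cycle length as a special case with unit rewards. The only (welcome) differences are cosmetic: you derive $\mathbf{p}_0^T\mathbf{G}\mathbf{1}=1/(1-\alpha)$ from the row-sum identity $\mathbf{Q}\mathbf{1}=\alpha\mathbf{1}$ where the paper argues directly that the time to absorption is geometric with parameter $1-\alpha$, and you explicitly justify invertibility of $\mathbf{I}-\mathbf{Q}$ via the spectral radius, which the paper leaves implicit.
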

\begin{proof}
Let a user start a sub-sequence by retrieving content $i$. The expected number of retrievals of content $j$ (or, number of times visiting state $j$) until the end of the sub-sequence is given by $g_{ij}$, where $g_{ij}$ is the ($i$-row,$j$-column) element of the \textit{fundamental matrix} $\mathbf{G}$ of the AMC~\cite{grinstead2012introduction}.

The fundamental matrix is defined as 
\begin{equation}\label{eq:fundamental-matrix-definition}
\mathbf{G} = \textstyle\sum_{n=0}^{\infty} \mathbf{Q}^{n} = (\mathbf{I}-\mathbf{Q})^{-1}
\end{equation}
where $\mathbf{Q}$ the matrix with the transition probabilities $q_{ij}$ between the transient states of the AMC ($i,j\in\mathcal{K}$). Following the same arguments as in Def.~\ref{lemma:process-as-a-markov-chain}, we get that $q_{ij} =  \alpha \cdot \sum_{n=1}^{N} v_n \cdot r^{n}_{ij}$, or, in a matrix format 
$\mathbf{Q} = \alpha \cdot \sum_{n=1}^{N} v_n\cdot \mathbf{R}^{n}.$
Substituting this into \eq{eq:fundamental-matrix-definition} gives the expression for $\mathbf{G}$ that appears in Lemma~\ref{lemma:rec-cost}.
Now, the cost of retrieving a content $j$ is $c_{j}$. Since each content $j$ is retrieved on average $g_{ij}$ times during a sub-sequence that starts from $i$, the total cost is given by 
\begin{equation}
E[C(S_{R})~|~ i] = \textstyle\sum_{j\in\mathcal{K}} g_{ij}\cdot c_{j}
\end{equation}

The probability that a sub-sequence starts at content $i$ is equal for all sub-sessions and is given by $p_{i}$. Thus, taking the expectation over all the possible initial states $i$, gives

\begin{equation}
E[C(S_{R})] = \sum_{i\in\mathcal{K}} E[C(S_{R})~|~ i] \cdot p_{i} =  \sum_{i\in\mathcal{K}} \sum_{j\in\mathcal{K}} g_{ij}\cdot c_{j} \cdot p_{i}
\end{equation}

Expressing the above summation as the product of the vectors $\mathbf{p}_{0}$ and $\mathbf{c}$, and the matrix $\mathbf{G}$, gives \eq{eq:cost-cycle}.

Similarly, if $g_{ij}$ is the amount of time spent on state $j$ before absorption, starting from state $i$, then $\sum_{j} g_{ij}$ must be equal to the total time spent at \emph{any state} before absorption
. Weighing this with the probability $p_i$ of starting at each state $i$, gives the expected time to absorption, which is the expected duration of a sub-sequence $E[|S_R|] = \sum_{i} p_{i} \cdot \sum_{j} g_{ij}$. Writing this in matrix notation, gives the first part of Eq.(\ref{eq:cycle-length}). 

However, observe that the probability of absorption at any state $i$ is equal to $1-\alpha$, independent of $i$. Hence, the number of steps till absorption is a \emph{geometric} random variable with parameter $1-\alpha$, and thus the mean time (i.e., number of steps) to absorption is $\frac{1}{1-\alpha}$.
\end{proof}

The following Theorem, which gives the expected retrieval cost for a user session, follows immediately from Lemmas~\ref{lemma:RR},~\ref{lemma:rec-cost}, and the Renewal-Reward theorem~\cite{mor2013}

\begin{theorem}\label{thm:total-expected-cost}
The expected retrieval cost per content, for a user session S, given a recommendation matrix $\mathbf{R}$ is
\begin{equation}\label{eq:cost-rate}
E[C(S)~|~\mathbf{R}^{1},.,\mathbf{R}^{N}] = \frac{\mathbf{p}_0^{T} \left(\mathbf{I}-\alpha \sum_{n=1}^{N} v_n  \mathbf{R}^{n}\right)^{-1}  \mathbf{c}}{\frac{1}{1-\alpha}}
\end{equation}
\end{theorem}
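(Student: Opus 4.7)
The plan is to simply assemble the Theorem from the two preceding lemmas via the Renewal-Reward theorem, since all of the heavy lifting (fundamental matrix expression for cost per cycle, and closed-form cycle length) has already been done.

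First I would invoke Lemma~\ref{lemma:RR} to justify viewing the full session $S$ as a renewal process whose i.i.d.\ cycles are the sub-sequences $S_R$ delimited by jumps outside the recommender. Because the cost $c_i$ accrued at each visited state is treated as a reward attached to that step, this places us squarely in the renewal-reward setting, where the long-run expected cost \emph{per content consumed} equals the expected reward per cycle divided by the expected cycle length.

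Next I would plug in the two quantities supplied by Lemma~\ref{lemma:rec-cost}: the expected per-cycle reward $E[C(S_R)] = \mathbf{p}_0^{T} \mathbf{G}\, \mathbf{c}$ and the expected cycle length $E[|S_R|] = \tfrac{1}{1-\alpha}$, where $\mathbf{G} = \bigl(\mathbf{I} - \alpha \sum_{n=1}^{N} v_n \mathbf{R}^{n}\bigr)^{-1}$. Substituting directly gives
\[
E[C(S)\mid \mathbf{R}^{1},\dots,\mathbf{R}^{N}] \;=\; \frac{E[C(S_R)]}{E[|S_R|]} \;=\; \frac{\mathbf{p}_0^{T}\bigl(\mathbf{I}-\alpha\sum_{n=1}^{N} v_n \mathbf{R}^{n}\bigr)^{-1}\mathbf{c}}{\tfrac{1}{1-\alpha}},
\]
which is precisely \eqref{eq:cost-rate}.

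I do not anticipate a real obstacle here, since the two lemmas already do the work; the only subtlety worth a sentence of justification is that the Renewal-Reward theorem applies cleanly because (i) successive sub-sequences $S_R$ are i.i.d.\ (each starts afresh from the distribution $\mathbf{p}_0$ after an absorbing jump, independently of the past), and (ii) both the per-cycle cost and cycle length have finite expectation, the latter being $1/(1-\alpha)<\infty$ for $\alpha\in[0,1)$. I would also note that the invertibility of $\mathbf{I}-\alpha\sum_n v_n \mathbf{R}^n$ is guaranteed because $\alpha\sum_n v_n \mathbf{R}^n$ is a substochastic matrix with spectral radius at most $\alpha<1$, so $\mathbf{G}$ is well defined and the Neumann-series derivation used in Lemma~\ref{lemma:rec-cost} is valid.
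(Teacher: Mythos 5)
Your proposal is correct and matches the paper's own (very terse) argument exactly: the paper likewise obtains Theorem~\ref{thm:total-expected-cost} immediately from Lemma~\ref{lemma:RR}, Lemma~\ref{lemma:rec-cost}, and the Renewal-Reward theorem, dividing the expected per-cycle cost $\mathbf{p}_0^{T}\mathbf{G}\,\mathbf{c}$ by the expected cycle length $\tfrac{1}{1-\alpha}$. Your added remarks on the i.i.d.\ structure of cycles, finiteness of expectations, and invertibility of $\mathbf{I}-\alpha\sum_n v_n \mathbf{R}^n$ are correct and in fact slightly more careful than the paper's presentation.
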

\section{Optimization Problem and Methodology}
\label{sec:optimization_methodology}
In this section, we use the results of the previous section to formulate the problem of minimizing the expected access cost until absorption under a set of modeling constraints.

\subsection{The Problem and its Constraints}
\begin{problem} [Nonconvex formulation]
\begin{small}
\label{problem:basis}
\begin{align}
\underset{\mathbf{R}^{1},.., \mathbf{R}^{N}}{\textnormal{minimize}} \; & {\mathbf{p}_{0}^{T} \cdot (\mathbf{I} - \alpha \cdot \sum_{n=1}^{N} v_n \cdot \mathbf{R}^{n})^{-1}}\cdot \mathbf{c}\label{eq:objective}\\
\textnormal{subject to} \quad
 & \sum_{j=1}^{K}\sum_{n=1}^{N} v_{n} \cdot r_{ij}^{n} \cdot u_{ij} \geq q\cdot q_{i}^{max},~\forall~i \in \mathcal{K} \label{quality-con1}\\ 
& \sum_{j = 1}^{K} r^{n}_{ij} = 1,~\forall~i \in \mathcal{K}~\textnormal{and}~n=1,...,N \label{affine-con1} \\
& \sum_{n = 1}^{N} r^{n}_{ij} \le 1,~\forall~\{i,j\} \in \mathcal{K} \label{constraint-implementable} \\
& 0 \le r^{n}_{ij} \le 1 \; (i \ne j), \;\; r^{n}_{ii} = 0~\forall~i,~n. \label{box-con}
\end{align}
\end{small}
\end{problem}

The constraint in Eq.(\ref{quality-con1}), is responsible for keeping the quality of the recommendations above a pre-specified (and given) threshold. The pair of constraints in Eqs.(\ref{affine-con1},\ref{box-con}), defines a probability simplex for every row of all the $\mathbf{R}^{n}$ matrices. Note that we also prohibit self-recommendations ($r_{ii}^{n} = 0~\forall~i$ and $n$) (see Eq.(\ref{box-con})).
Importantly, Eq.(\ref{constraint-implementable}) is necessary in the position-aware setup, to ensure that the same content will not be recommended in two different positions.
As an example assume that $\mathbf{r}_{1}^{1} = [0, 1, 0, 0]$ and $\mathbf{r}_{1}^{2} = [0, 0.2, 0.3, 0.5]$, in that case we clearly see that content 2 would always be shown in position 1 (after watching content 1), but 20\% of those times it would be shown in position 2 as well. Hence, Eq.(\ref{constraint-implementable}) ensures that such decision vectors would be \emph{infeasible}.

Evidently, our feasible space consists of either linear (equalities or inequalities) or box constraints with respect to the decision variables $r^{n}_{ij}$. However, the objective is non-convex in general.



\begin{lemma}\label{lemma:initial-nonconvex}
The problem described in \textbf{OP~\ref{problem:basis}} is nonconvex.
\end{lemma}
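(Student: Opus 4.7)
The plan is to first observe that all constraints in OP~\ref{problem:basis} are linear equalities, linear inequalities, or box constraints in the variables $r^n_{ij}$, so the feasible region is a polytope and in particular convex. Any non-convexity of the problem must therefore come from the objective~\eqref{eq:objective}. Writing $\mathbf{M}(\mathbf{R}^1,\dots,\mathbf{R}^N)=\sum_{n=1}^{N} v_n \mathbf{R}^n$, the map from decision variables to $\mathbf{M}$ is linear, and the objective becomes $f=\mathbf{p}_0^T(\mathbf{I}-\alpha\mathbf{M})^{-1}\mathbf{c}$, a rational function of $\mathbf{R}^1,\dots,\mathbf{R}^N$.

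Next, I would expand the inverse via its Neumann series, which converges because $\alpha<1$ and $\mathbf{M}$ inherits row-stochasticity from the $\mathbf{R}^n$ (so its spectral radius is at most one). This yields
\begin{equation*}
f = \sum_{k=0}^{\infty}\alpha^k\,\mathbf{p}_0^T\mathbf{M}^k\mathbf{c} \;=\; \mathbf{p}_0^T\mathbf{c} + \alpha\,\mathbf{p}_0^T\mathbf{M}\mathbf{c} + \alpha^2\,\mathbf{p}_0^T\mathbf{M}^2\mathbf{c} + \cdots
\end{equation*}
The $k=0,1$ terms are constant and linear in $r^n_{ij}$, but the $k=2$ term already contains bilinear cross products of the form $\alpha^2 p_i v_n v_m c_k\, r^n_{ij} r^m_{jk}$, summed over intermediate states $j$ and position pairs $(n,m)$. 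Bilinear monomials $xy$ in distinct variables are the canonical example of indefinite quadratics, so this term is the natural source of non-convexity, and higher-order ($k\ge 3$) contributions do not cancel it.

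To make this rigorous I would exhibit a concrete counterexample. Pick the smallest non-trivial instance, e.g.\ $K=3$, $N=1$ (so $\mathbf{M}=\mathbf{R}^1$), $\alpha\in(0,1)$, uniform $\mathbf{p}_0$, and a cost vector $\mathbf{c}$ with sufficiently unequal entries. Choose utility matrix $\mathbf{U}$ equal to the all-ones matrix so that \eqref{quality-con1} reduces to $1\ge q$ and is vacuously satisfied for any $q\le 1$, leaving the entire row-stochastic, self-loop-free polytope available. Then select two feasible matrices $\mathbf{R}_A$ and $\mathbf{R}_B$ (for example, two permutations of the cyclic transition on three states), compute $f$ at $\mathbf{R}_A$, $\mathbf{R}_B$ and at the midpoint $\tfrac{1}{2}(\mathbf{R}_A+\mathbf{R}_B)$ in closed form via the $3\times 3$ matrix inversion, and verify that $f\!\left(\tfrac{1}{2}(\mathbf{R}_A+\mathbf{R}_B)\right) > \tfrac12 f(\mathbf{R}_A)+\tfrac12 f(\mathbf{R}_B)$, which directly falsifies Jensen's inequality and hence convexity.

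The main obstacle I anticipate is book-keeping: the row-stochasticity, the prohibition of self-recommendations $r^n_{ii}=0$, and the position constraint \eqref{constraint-implementable} jointly shrink the feasible polytope, so the chosen $\mathbf{R}_A,\mathbf{R}_B$ (or, equivalently, the Hessian-indefinite direction if one prefers a local argument at an interior point) must be verified to lie in it. Handling this cleanly is essentially the only delicate part; once a valid pair is in hand, the Jensen violation follows from a direct $3\times 3$ determinant/inverse computation.
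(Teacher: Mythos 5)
Your overall strategy differs from the paper's. The paper also reduces to the special case $N=1$, $v_1=1$, $\mathbf{p}_0=\mathbf{c}=\mathbf{w}$, but then disposes of convexity in one line by invoking the matrix-fractional function of Boyd--Vandenberghe: $\mathbf{w}^T(\mathbf{I}-\alpha\mathbf{R})^{-1}\mathbf{w}$ is convex only when the matrix argument is restricted to symmetric PSD matrices, and $\mathbf{R}$ is a general (non-symmetric) row-stochastic matrix. Your route --- Neumann expansion to expose the bilinear $k=2$ terms, followed by an explicit Jensen violation on a small instance --- is more self-contained and, if completed, actually more rigorous than the paper's (the paper's ``unless $\mathbf{R}$ is symmetric PSD, $f$ is non-convex'' is a loose converse of the cited convexity result, not a theorem). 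Your observation that the constraints \eqref{quality-con1}--\eqref{box-con} are all affine, so non-convexity can only live in the objective, matches the paper exactly.

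There is, however, a concrete gap in the counterexample you sketch. With uniform $\mathbf{p}_0=\tfrac{1}{K}\mathbf{1}$ and $\mathbf{R}_A,\mathbf{R}_B$ taken to be permutation (hence doubly stochastic) matrices, every point of the segment $\lambda\mathbf{R}_A+(1-\lambda)\mathbf{R}_B$ is doubly stochastic, so $\mathbf{1}^T\mathbf{R}^k=\mathbf{1}^T$ for all $k$ and
\begin{equation*}
f(\mathbf{R})=\tfrac{1}{K}\,\mathbf{1}^T\Bigl(\sum_{k\ge 0}\alpha^k\mathbf{R}^k\Bigr)\mathbf{c}=\frac{1}{K(1-\alpha)}\,\mathbf{1}^T\mathbf{c}
\end{equation*}
is \emph{constant} along that segment, for any $\mathbf{c}$. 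No Jensen violation can be extracted from this instance; the bilinear terms you identified cancel precisely because of the left-eigenvector structure. The fix is easy --- take a non-uniform $\mathbf{p}_0$ (which Def.~\ref{def:requests} permits), or choose $\mathbf{R}_A,\mathbf{R}_B$ whose convex combinations are not doubly stochastic --- but as written the decisive step of your proof (the ``direct $3\times 3$ computation'') would come back negative, so the argument is not yet complete.
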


\begin{proof}
The problem \textbf{OP~\ref{problem:basis}} comprises $N\cdot K^{2}$ variables $r^{n}_{ij}$, and a set of $K^{2}\cdot(N+2) + K$ linear (equality and inequality) constraints, thus the feasible solution space is convex. However, assume w.l.o.g that $\mathbf{p}_0 = \mathbf{c} = \mathbf{w}$, $N=1$, and $v_1 = 1$; the objective now becomes $f(\mathbf{R})=\mathbf{w}^{T}(\mathbf{I}-\alpha \cdot \mathbf{R})^{-1}\mathbf{w}$. 
Unless $\mathbf{R}$ is symmetric positive semi-definite (PSD), $f(\mathbf{R})$ is non-convex~\cite{boyd2004convex}. Forcing $\mathbf{R}$ to be symmetric would \emph{require additional} constraints that lead to suboptimal solutions of this problem~\cite{ermon2014designing}.
%
%
Therefore, our objective as is, is nonconvex and there are no exact methods that can solve it in polynomial time.
\end{proof}

\subsection{The Journey to Optimality}\label{sec:journey-optimality}
In addition to non-convexity, a key difficulty in solving \textbf{OP~\ref{problem:basis}} is the inverse matrix in the objective. Any gradient-based algorithm would require a matrix inversion at each gradient step (an operation of complexity $\mathcal{O}(K^3)$. 
To circumvent this, we introduce $K$ auxiliary variables $\mathbf{z}^{T}$, for which we will demand $\mathbf{z}^{T} = \mathbf{p}_0^T \cdot(\mathbf{I} -\alpha \cdot \sum_{n=1}^{N} v_n \mathbf{R}^{n})^{-1}$. This introduces $K$ new equality constraints, leading to the following equivalent problem.\footnote{Two problems are equivalent if the solution of the one, can be uniquely obtained through the solution of the other~\cite{boyd2004convex}; introducing auxiliary variables preserves the property.}

\begin{intermediate}[Equivalent formulation]\label{problem:eq-1}
\begin{small}
\begin{align}
\underset{\mathbf{z},~\mathbf{R}^{1},..,\mathbf{R}^{N}}
{\textnormal{minimize}} \; & 
  \mathbf{c}^T \cdot \mathbf{z},\label{eq:objective-interm1}\\
\textnormal{subject to} \quad &  \mathbf{z}^T - \alpha \cdot \mathbf{z}^T\cdot \sum_{n=1}^{N} v_n \cdot \mathbf{R}^{n} = \mathbf{p}_0^T  \label{stationarity-con} \\
& \textnormal{Eqs}.(\ref{quality-con1},\ref{affine-con1},\ref{constraint-implementable},\ref{box-con}) 
\end{align}
\end{small}
\end{intermediate}
The new objective is now convex (in fact, linear) in the new variable ($\mathbf{z}$). However, as the set of constraints Eq.(\ref{stationarity-con}) are all \emph{quadratic equalities}, the problem remains nonconvex. 
The above formulation falls under the umbrella of non-convex QCQP (Quadratically Constrained Quadratic Program), where it is common to perform a convex relaxation of the quadratic constraints, and then solve an approximate convex problem (e.g., SDP or Spectral relaxation, see~\cite{park2017general} for more details). The problem can also be seen as \emph{bi-convex} in variables $\mathbf{R}^{n}$ and $\mathbf{z}$, respectively. Alternating Direction Method of Multipliers (ADMM) can be applied to such problems, iteratively solving convex subproblems~\cite{boyd2011distributed,giannakas2018show}. Nevertheless, none of these methods provides any optimality guarantees, and even convergence for non-convex ADMM is an open research topic.






To further deal with this additional complication, we define another set of variables as $f^{n}_{ij} =  z_{i} \cdot r^{n}_{ij}$. Since the $j$-th element of the $n$-th vector $\mathbf{z}^T \cdot \mathbf{R}^{n}$ can be written as $ \sum_{i} z_{i} \cdot r^{n}_{ij}$, we can write now $  \mathbf{z}^T \cdot \mathbf{R}^{n} = \mathbf{1}^T \cdot \mathbf{F}^{n}$, and the new variables are $\mathbf{z}$ and $\mathbf{F}^{1},..,\mathbf{F}^{N}$, which are a $K \times 1$ vector, and  $N$ $K \times K$ matrices respectively.

This new transformation leads to the following problem.
\begin{problem}[LP formulation]\label{problem:basis-LP}
\label{eq:objective-LP}
\begin{small}
\begin{align}
\underset{\mathbf{z},~\mathbf{F}^{1},..,\mathbf{F}^{N}}{\textnormal{minimize}} \; \; \; \; & 
  \mathbf{c}^T \cdot \mathbf{z}, \label{new-obj2}\\
\textnormal{subject to} \quad & \sum_{j =1}^{K} \sum_{n=1}^{N} v_{n} \cdot f^{n}_{ij} \cdot u_{ij} \geq z_i \cdot q \cdot q_i^{max},~\forall~i~\in \mathcal{K}
\label{quality-con-LP}\\
& \sum_{j = 1}^{K} f^{n}_{ij} = z_i , ~\forall~i~\in \mathcal{K}~\textnormal{and}~n=1,..,N \label{affine-con-LP}\\
& \sum_{n = 1}^{N} f^{n}_{ij} \le z_i,~\forall~\{i,~j\} \in \mathcal{K} \label{implementable-lp} \\
& f^{n}_{ij} \ge 0 \; (i \neq j), \;\; f^{n}_{ii} = 0, \forall~i,j~\in~\mathcal{K} \label{f-positive-con-LP}\\
&	z_j - \alpha \cdot \sum_{n=1}^{N} v_n \cdot \sum_{i}^{K}f^{n}_{ij} = p_{j}, ~\forall j \in \mathcal{K} \label{zf-relaxation}
\end{align}
\end{small}
\end{problem}

\begin{lemma}\label{lemma:bijection}
The change of variables $f_{ij}^{n} =  z_{i} \cdot r_{ij}^{n}$, is a bijection (one-to-one mapping) between $(z_i,r_{ij}^{n})$ and $(z_i,f_{ij}^{n})$. 
\end{lemma}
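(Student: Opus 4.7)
The plan is to verify bijectivity by exhibiting an explicit inverse of the change of variables. The forward map $\Phi:(z_i, r_{ij}^n) \mapsto (z_i, f_{ij}^n)$ with $f_{ij}^n = z_i \cdot r_{ij}^n$ is well-defined and single-valued by inspection, so the whole content of the lemma is to construct a candidate inverse $\Psi:(z_i, f_{ij}^n) \mapsto (z_i, r_{ij}^n)$ and to check that $\Psi \circ \Phi$ and $\Phi \circ \Psi$ are the identity on the feasible set. The natural candidate is of course $r_{ij}^n = f_{ij}^n / z_i$, and the only subtlety is that this is well-defined only when $z_i \neq 0$ for every $i$.

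The main (and really only) obstacle is therefore to show that every feasible point of \textbf{OP~\ref{problem:basis-LP}} satisfies $z_i > 0$ strictly, so that no ambiguity arises in the inverse map. I would derive this directly from constraint~(\ref{zf-relaxation}), namely
\begin{equation*}
z_j = p_j + \alpha \sum_{n=1}^N v_n \sum_{i=1}^K f_{ij}^n.
\end{equation*}
By~(\ref{f-positive-con-LP}) every $f_{ij}^n \ge 0$, and $\alpha \in [0,1]$, $v_n \ge 0$, so the sum on the right is nonnegative. Hence $z_j \ge p_j$ for all $j$, and since Definition~\ref{def:requests} assumes $p_j \in (0,1)$, we conclude $z_j > 0$ for every $j \in \mathcal{K}$.

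With $z_i > 0$ established, the inverse map $r_{ij}^n := f_{ij}^n / z_i$ is well-defined on the feasible set. A one-line check shows that $\Psi \circ \Phi$ returns $r_{ij}^n$ from $(z_i, z_i r_{ij}^n)$, and that $\Phi \circ \Psi$ returns $f_{ij}^n$ from $(z_i, z_i \cdot f_{ij}^n/z_i)$; both compositions are the identity. This proves bijectivity of the change of variables. As a sanity check one may further verify that the two constraint sets are images of each other under $\Phi$: for instance, dividing~(\ref{affine-con-LP}) by $z_i > 0$ recovers~(\ref{affine-con1}), dividing~(\ref{quality-con-LP}) by $z_i$ recovers~(\ref{quality-con1}), and similarly for~(\ref{implementable-lp}) and~(\ref{box-con}). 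This constraint-level correspondence is not strictly needed for the bijection claim itself, but it is what will allow the lemma to be used in the subsequent equivalence argument between \textbf{OP~\ref{problem:basis}} and the LP \textbf{OP~\ref{problem:basis-LP}}.
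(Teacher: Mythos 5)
Your proof is correct and follows essentially the same route as the paper's: both establish $z_i > 0$ from constraint~(\ref{zf-relaxation}) together with $f_{ij}^n \ge 0$ and $p_j \in (0,1)$, and then invert via $r_{ij}^n = f_{ij}^n / z_i$. Your version is slightly more explicit about checking the two compositions and the correspondence of the constraint sets, but there is no substantive difference.
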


\begin{proof}
This follows immediately, as we can readily obtain $r_{ij}^{n} = \frac{f_{ij}^{n}}{z_{i}}$ from $\{z_i, r_{ij}^{n} \}$. Note that, since $z_{j} = \sum_{i} f_{ij}^{n} + p_{j}$, and $p_{i}~\in~(0,1)~\forall~i$, i.e. nonzero (see Def.~\ref{def:requests}), this forces $\mathbf{z} > \mathbf{0}$ and thus $r_{ij}^{n}$ are always uniquely defined.
\end{proof}

\begin{corollary}
\textbf{OP~\ref{problem:basis}} can be solved efficiently as an LP.
\end{corollary}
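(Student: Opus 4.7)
The plan is to chain together the equivalences $\textbf{OP~\ref{problem:basis}} \equiv \text{Intermediate Step} \equiv \textbf{OP~\ref{problem:basis-LP}}$ and then invoke the polynomial-time solvability of linear programs. First I would verify that the Intermediate Step reformulation really is $\textbf{OP~\ref{problem:basis}}$ in disguise: whenever the $\mathbf{R}^n$ are row-stochastic, the spectral radius of $\alpha \sum_n v_n \mathbf{R}^n$ is at most $\alpha < 1$, so $\mathbf{I} - \alpha \sum_n v_n \mathbf{R}^n$ is invertible and the vector $\mathbf{z}^T = \mathbf{p}_0^T(\mathbf{I} - \alpha \sum_n v_n \mathbf{R}^n)^{-1}$ is the unique solution of \eq{stationarity-con}. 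Hence adding $\mathbf{z}$ as an auxiliary variable with \eq{stationarity-con} preserves the feasible set and reproduces the original objective $\mathbf{p}_0^T(\mathbf{I}-\alpha\sum_n v_n \mathbf{R}^n)^{-1}\mathbf{c} = \mathbf{c}^T\mathbf{z}$.

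Next I would verify, constraint by constraint, that the substitution $f_{ij}^n = z_i \cdot r_{ij}^n$ turns every constraint of the Intermediate Step into a linear constraint in $(\mathbf{z}, \mathbf{F}^1, \ldots, \mathbf{F}^N)$. Multiplying \eq{quality-con1} by $z_i$ yields \eq{quality-con-LP}; multiplying \eq{affine-con1} by $z_i$ yields \eq{affine-con-LP}; multiplying \eq{constraint-implementable} by $z_i$ yields \eq{implementable-lp}; the $j$-th coordinate of \eq{stationarity-con}, namely $z_j - \alpha \sum_n v_n \sum_i z_i r_{ij}^n = p_j$, is precisely \eq{zf-relaxation} after substitution; and the box constraint $r_{ij}^n \in [0,1]$ with $r_{ii}^n = 0$ translates (using $z_i \ge 0$) into \eq{f-positive-con-LP}, with the upper bound $f_{ij}^n \le z_i$ being already implied by \eq{affine-con-LP} and nonnegativity. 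The objective remains $\mathbf{c}^T \mathbf{z}$. Since every constraint and the objective of $\textbf{OP~\ref{problem:basis-LP}}$ are linear in $(\mathbf{z}, \mathbf{F}^n)$, it is an LP with $NK^2 + K$ variables and $O(NK^2)$ linear constraints.

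Finally I would apply Lemma~\ref{lemma:bijection} to conclude that the map $(z_i, r_{ij}^n) \mapsto (z_i, f_{ij}^n)$ is a bijection between the feasible sets of the Intermediate Step and $\textbf{OP~\ref{problem:basis-LP}}$, so their optimal values coincide and any LP optimizer yields an optimizer of $\textbf{OP~\ref{problem:basis}}$ through $r_{ij}^n = f_{ij}^n / z_i$. Since LPs are solvable in polynomial time (e.g., by interior-point methods), the corollary follows. The step I expect to be the main subtlety is ensuring $z_i > 0$ at every feasible point so that the inverse map is well defined and $r_{ij}^n$ stays in $[0,1]$; this is handled by combining \eq{zf-relaxation} with \eq{f-positive-con-LP} to get $z_j \ge p_j > 0$, and by using \eq{affine-con-LP} to bound $f_{ij}^n \le z_i$, so that the LP solution automatically lies in the regime where the bijection is valid.
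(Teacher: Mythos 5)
Your proposal is correct and follows essentially the same route as the paper: introduce the auxiliary variable $\mathbf{z}$ to obtain the Intermediate Step, apply the change of variables $f_{ij}^{n}=z_i r_{ij}^{n}$ to reach \textbf{OP~\ref{problem:basis-LP}}, and invert via Lemma~\ref{lemma:bijection}, whose validity rests on $z_j=\sum_i f_{ij}^n+p_j>0$ exactly as you identify. The paper's stated proof is just the one-line appeal to Lemma~\ref{lemma:bijection}; your version fills in the same supporting steps that the paper develops in Section~\ref{sec:journey-optimality}.
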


\begin{proof}
Equivalency due to Lemma~\ref{lemma:bijection}.
\end{proof}

We have therefore transformed the nonconvex \textbf{OP~\ref{problem:basis}} to a convex (LP) one \textbf{OP~\ref{problem:basis-LP}}, and can now solve it optimally.


\subsection{A Myopic Approach}\label{sec:greedy}
A natural way to tackle the \textbf{OP~\ref{problem:basis}} is to try minimizing the cost of content retrieval in a single-content session (i.e., only one transition in the Markov chain). This is equivalent to minimizing the scalar quantity
\begin{equation}
\mathbf{p}_0^{T} \cdot(\alpha \cdot \sum_{n=1}^{N} v_n \cdot \mathbf{R}^{n}+ (1-\alpha) \cdot \mathbf{1}^{T} \cdot \mathbf{p}_{0}) \cdot \mathbf{c}
\end{equation}
 Ignoring the terms that do not depend on the control variables $\mathbf{R}^{n}$, yields the following.
\begin{problem} [Greedy Aware Recommendations]
\label{problem:single}
\begin{small}
\begin{align} 
\underset{\mathbf{R}^{1},..,\mathbf{R}^{N}}{\textnormal{minimize}}~~ \; & 
\mathbf{p}_0^{T} \cdot \Big(\sum_{n=1}^{N} v_n \cdot \mathbf{R}^{n}\Big) \cdot \mathbf{c},\label{objective-single}\\
\textnormal{subject to} \quad & \textnormal{Eqs}.(\ref{quality-con1},\ref{affine-con1},\ref{constraint-implementable},\ref{box-con}) \label{constraint-set-myopic}
\end{align}
\end{small}
\end{problem}
Unlike the multi-step problem, this is already an LP, and can be solved directly without the earlier transformation steps. 
This solution of \textbf{OP~\ref{problem:single}} will serve in the upcoming results section as a baseline approach, to solving the hard basis problem \textbf{OP~\ref{problem:basis}}.
Interestingly, 
the solution of \textbf{OP~\ref{problem:single}} (we will call \emph{Greedy} from now), resembles the policies proposed in \cite{chatzieleftheriou2017technical,cache-centric-video-recommendation}. Although the algorithm of~\cite{chatzieleftheriou2017technical} targets a different context, i.e., the \emph{joint} caching and single access content recommendation, the Greedy algorithm could be interpreted as applying the recommendation part of~\cite{chatzieleftheriou2017technical} for each user, along with a continuous relaxation of the control (recommendation) variables. In doing so, the recommendation problem is simply an LP of the type of Eq.(\ref{objective-single}), when the recommendations are allowed to be probabilistic. Due to this relaxation, the greedy algorithm is an upper bound for~\cite{chatzieleftheriou2017technical}, looking at the recommendation problem \emph{only}.
\section{Validation Results}
\label{sec:sims}
\subsection{Warm Up}
In this section we evaluate the performance of the proposed algorithm and provide insights regarding the behavior of the network-friendly recommendations schemes. For a realistic evaluation, we use three collected datasets from video/audio services. Before diving into the details, we need to state the following

\noindent \emph{Performance metric}: \emph{Cache Hit Rate (CHR)}, as computed by the objective of Eq.(\ref{eq:cost-rate}), here we will minimize the cache miss.

\noindent \emph{Relative Gain}: computed as $\frac{CHR_{(\text{proposed})} - CHR_{(\text{baseline})}}{CHR_{(\text{baseline})}}\cdot 100 \%$.

\noindent\emph{$\mathbf{p}_0$}: drawn from Zipf~\cite{adamic2002zipf} of parameter $s$.

\noindent\emph{$\mathbf{v}$}: drawn from Zipf~\cite{RecImpact-IMC10} of parameter $\beta$ 

\noindent\emph{$\alpha$}: will vary from 0.7 to 0.8

\noindent \emph{$\mathbf{c}$}: $c_i=0$ for the $C$ (cache capacity) most popular contents according to $\mathbf{p}_0$, and 1 to the rest.

\noindent \emph{Solving \textbf{OP~\ref{problem:basis-LP},~\textbf{OP~\ref{problem:single}}}:} carried out using IBM ILOG CPLEX in Python. We note that since CPLEX is designed to receive LPs in the standard form, we had to vectorize our matrices in order to bring the problem in the format $\underset{\mathbf{x} \ge 0, \mathbf{A}\cdot \mathbf{x}\le \mathbf{b}}{\textbf{min}}~\{\mathbf{c}^{T}\cdot \mathbf{x}\}$ with linear and bound constraints over the variables. Regarding~\textbf{OP~\ref{problem:single}}, it is easy to see that the problem's objective Eq.(\ref{objective-single}) decomposes into $K$ independent minimization problems, of size $NK$ each, as the \emph{variables per content $i$ are not coupled}.
Finally note that for the simulations in Figs.~\ref{fig:no1},~\ref{fig:no2}, we will quote the cache-hit rate \emph{without} recommendations for reference, (i.e. storing the most popular contents that fit in the cache $C$, based on $\mathbf{p}_0$) and, 
which we denote as $MPH$ (Most Popular Hit - No Recommendations). This information along with the simulation parameters are included in Table~\ref{table:parameters}.

\subsection{Schemes we compare with}
We refer to our algorithm (\textbf{OP~\ref{problem:basis-LP}}) as \textit{Optimal}. 

\noindent \emph{Greedy Aware:} We consider as baseline algorithm for network-friendly recommendations
(\textbf{OP~\ref{problem:single}}~\cite{chatzieleftheriou2017technical}), which is a position-aware scheme, 
but does not take into account that requests are sequential. 

\noindent \emph{CARS:} algorithm~\cite{giannakas2018show}, a position-\emph{unaware} scheme for sequential content requests proposed in, will serve as our second baseline. The CARS algorithm optimizes (with no guarantees) the recommendations for a user performing multiple sequential requests, but assumes that the user selects \emph{uniformly} one of the recommendations regardless of the position they appear. 

\myitem{Note on \emph{CARS}.} In our framework, this translates to solving \textbf{OP~\ref{problem:basis}} for uniform $\mathbf{v}$. The algorithm will then return \emph{$N$ identical stochastic recommendation matrices}. Importantly, whichever $\mathbf{v}$ we choose, the parenthesis of the Eq.(\ref{eq:objective}) will be $(\mathbf{I}-\alpha \cdot (v_1\cdot \mathbf{R}+..+v_N\cdot \mathbf{R}))  = (\mathbf{I}-\alpha \cdot \mathbf{R})$. This explains why the hit rate of \emph{CARS} in the plots, remains constant regardless of the click distribution $\mathbf{v}$.

\subsection{Datasets}

\myitem{YouTube FR.} ($K = 1054$)
We used the crawling methodology of~\cite{kastanakis-cabaret-mecomm} and collected a dataset from YouTube in France. We considered 11 of the most popular videos on a given day, and did a breadth-first-search (up to depth 2) on the lists of related videos (max 50 per video) offered by the YouTube API. We built the matrix $\mathbf{U}\in\{0,1\}$ from the collected video relations.

\myitem{last.fm.} ($K = 757$)
We considered a dataset from the last.fm database \cite{lastfm}. We applied the ``getSimilar'' method to the content IDs' to fill the entries of the matrix $\mathbf{U}$ with similarity scores in [0,1]. We then set scores above $0.1$ to $u_{ij}=1$ to obtain a dense $\mathbf{U}$ matrix. 

\myitem{MovieLens.} ($K = 1066$) We consider the Movielens movies-rating dataset~\cite{movielens}, containing $69162$ ratings (0 to 5 stars) of $671$ users for $9066$ movies. We apply an item-to-item collaborative filtering (using 10 most similar items) to extract the missing user ratings, and then use the cosine distance ($\in[-1,1]$) of each pair of contents based on their common ratings. We set $u_{ij}=1$ for contents with cosine distance larger than $0.6$.

\subsection{Results}

\myitem{\emph{Optimal} vs \emph{CARS}.}
We initially focus on answering a basic question: \emph{Is the non-uniformity of users' preferences to some positions helpful or harmful for a network friendly recommender?}
In Figs.~\ref{fig:opt-agno-mvlns},~\ref{fig:relative-vs-cars} (see Table~\ref{table:parameters} for sim. parameters), we assume behaviors of increasing entropy; starting from users that show preference on the higher positions of the list (low entropy), to users that select uniformly recommendations (maximum entropy). In our simulations, we have used a zipf distribution~\cite{RecImpact-IMC10} over the $N$ positions and by decreasing its exponent, the entropy on the $x$-axis is increased. As an example, in Fig.~\ref{fig:opt-agno-mvlns}, lowest $H_\mathbf{v}$ corresponds to a vector of probabilities $\mathbf{v}=[0.8,0.2]$ (recall that $N=2$), while the highest one on the same plot to $\mathbf{v}=[0.58,0.42]$.

\myitem{Observation 1.} Our first observation is that the lower the entropy, the higher \emph{the optimal result}. 
In the extreme case where the $H_{\mathbf{v}} \to 0$ (virtually this would mean $N=1$, the user clicks deterministically), 
the optimal hit rate becomes maximum. 
This can be validated in
Fig.~\ref{fig:sensitivity-nb-of-recs}, 
where for increasing entropy the 
the hit rate decreases and its max is attained for $N=1$.





\begin{table}
\centering
\caption{Parameters of the simulation}\label{table:compare-youtube-people} \label{table:parameters}
\begin{small}
\begin{tabular}{l|l|l|l|l|l}
			& {q \%} & {$zipf(s)$} & {$\alpha$} & {$N$} & {MPH \%} \\
\hline \hline
{MovieLens}			&{80} &{0.8} &{0.7} & {2} & {23.26} \\
\hline
{YouTube FR}		&{95} &{0.6} &{0.8} & {2} & {12.17}\\
\hline
{last.fm}			&{80} &{0.6} &{0.7} & {3} & {11.74}\\

\end{tabular}
\end{small}
\end{table}

\begin{figure} 
\centering
\subfigure[Absolute Perf.]{\includegraphics[width=0.4\columnwidth]{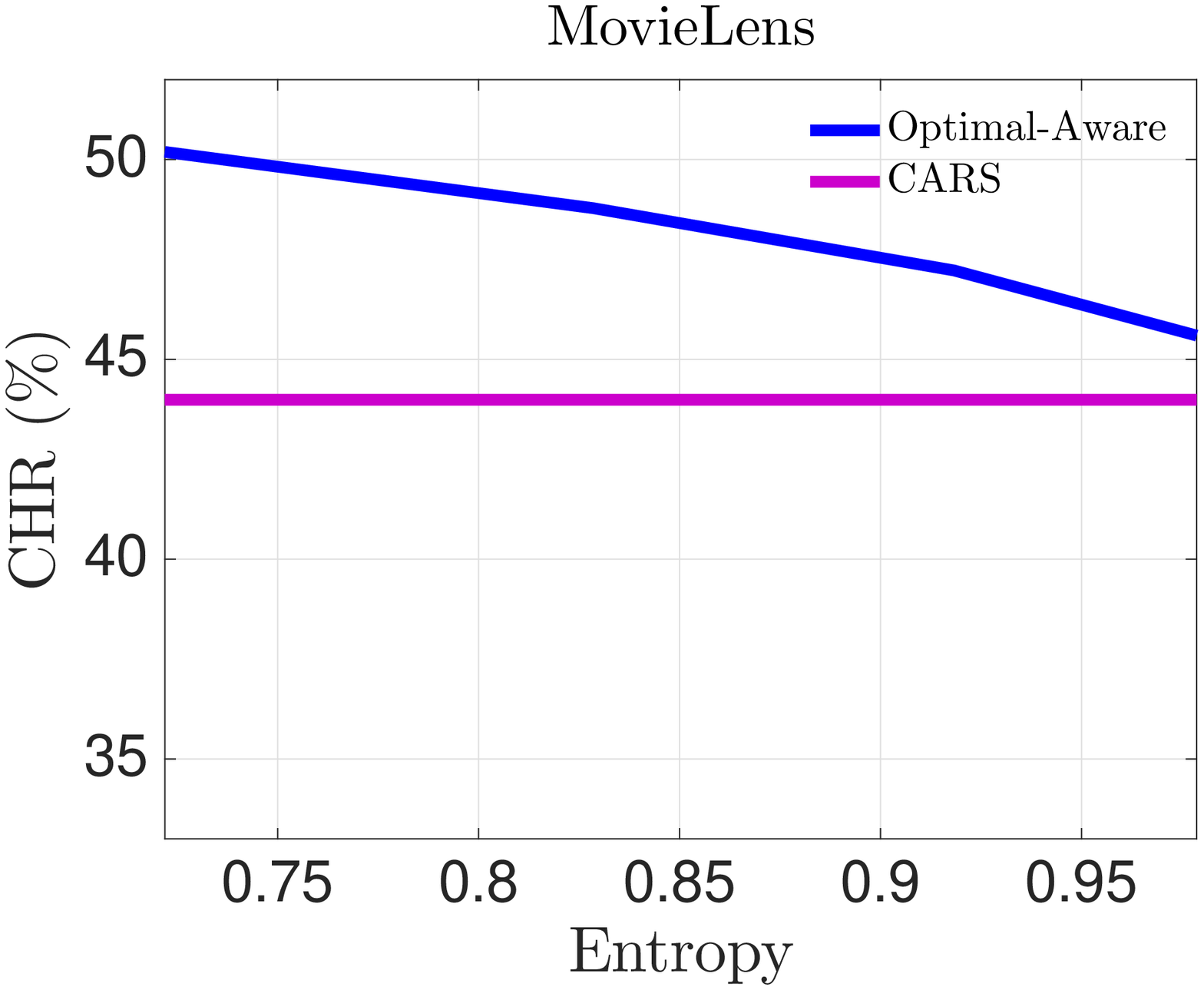}\label{fig:opt-agno-mvlns}} 
\hspace{0.08\columnwidth}
\subfigure[Relative Gain \%]{\includegraphics[width=0.4\columnwidth]{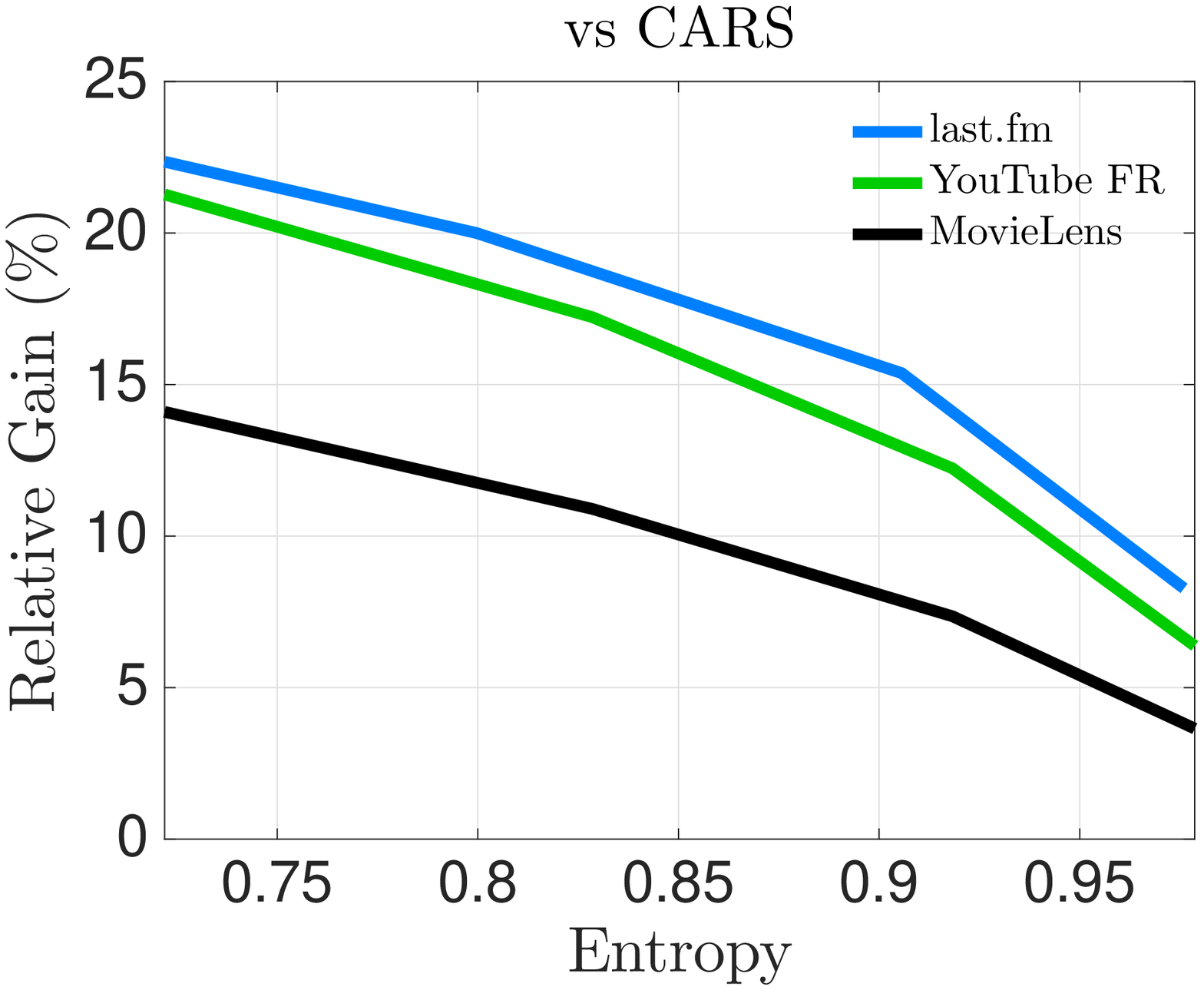}\label{fig:relative-vs-cars}}
\vspace{-2.5mm}
\caption{Cache Hit Rate vs~ $H_{\mathbf{v}}~(C/K\approx1.00\%)$}
\label{fig:no1}
\end{figure}

\begin{figure} \label{fig:opt-greedy}
\centering
\subfigure[Absolute Perf.]{\includegraphics[width=0.4\columnwidth]{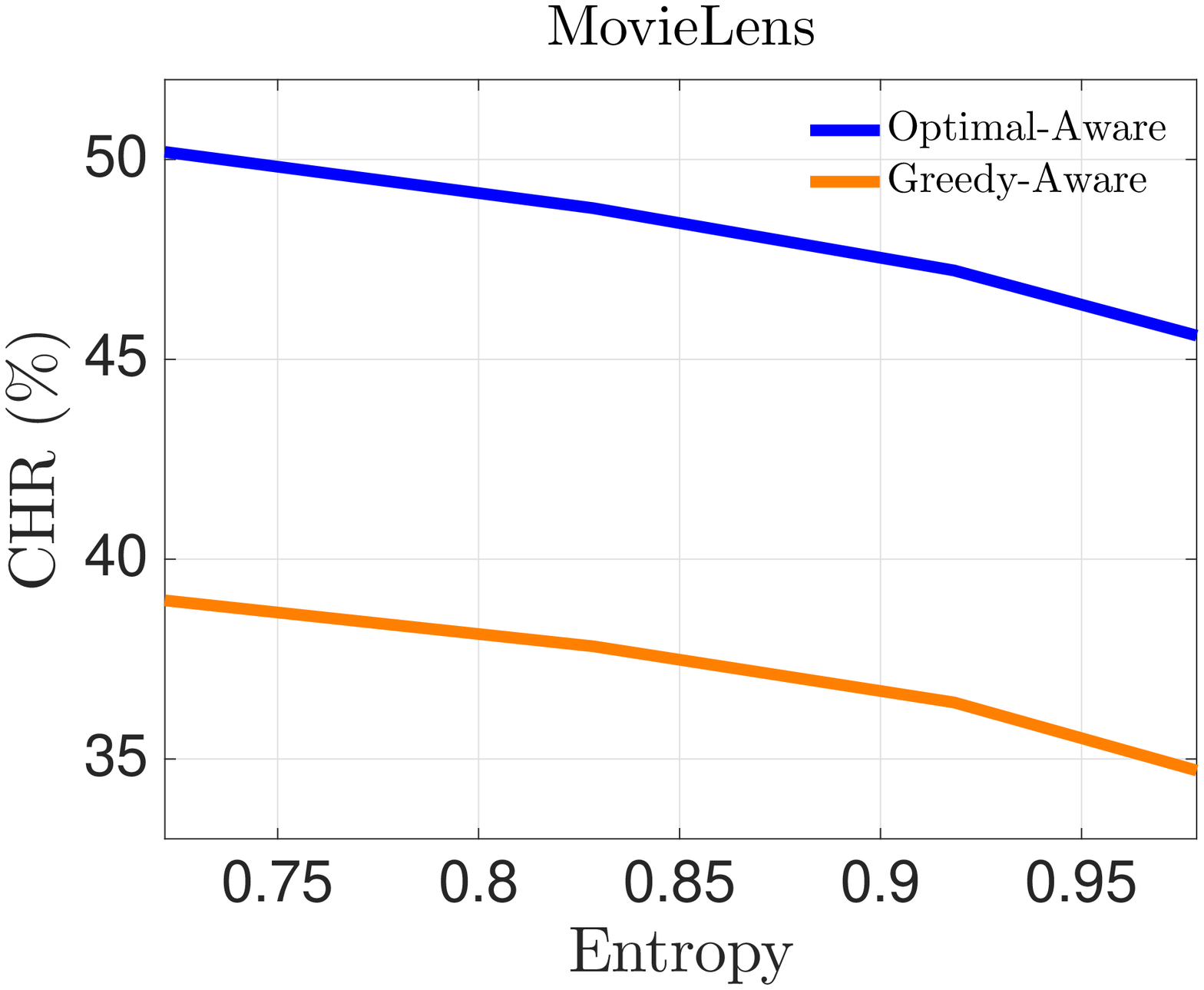}\label{fig:opt-greedy-mvlns}}
\hspace{0.08\columnwidth}
\subfigure[Relative Gain \%]{\includegraphics[width=0.4\columnwidth]{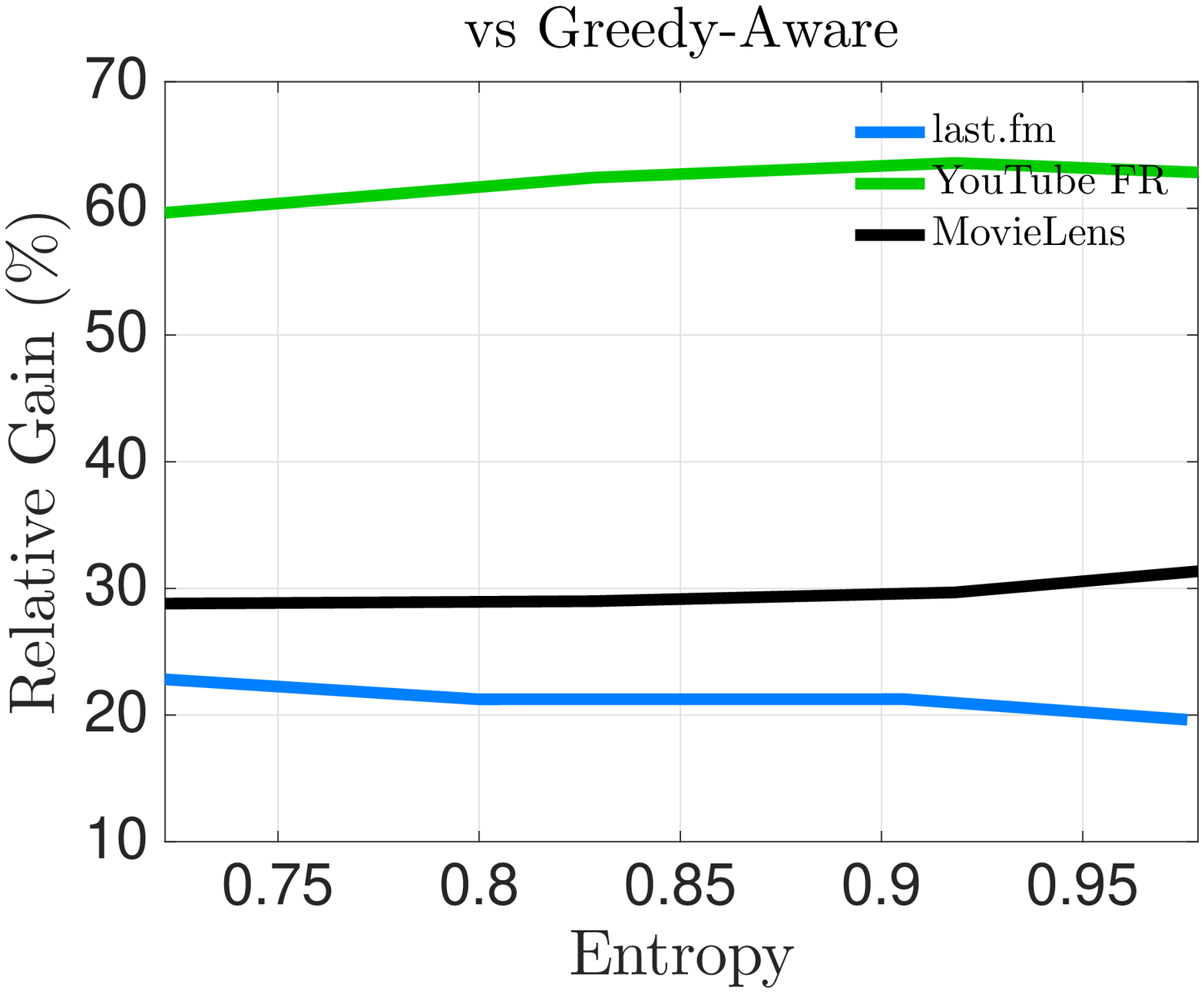}\label{fig:opt-greedy-relative-all}}
\vspace{-2.5mm}
\caption{Cache Hit Rate vs~ $H_{\mathbf{v}}~(C/K\approx1.00\%)$}
\label{fig:no2}
\end{figure}

\begin{figure} \label{fig:combination-plots}
\centering
\subfigure[$q=80\%,~K=400$]{\includegraphics[width=0.4\columnwidth]{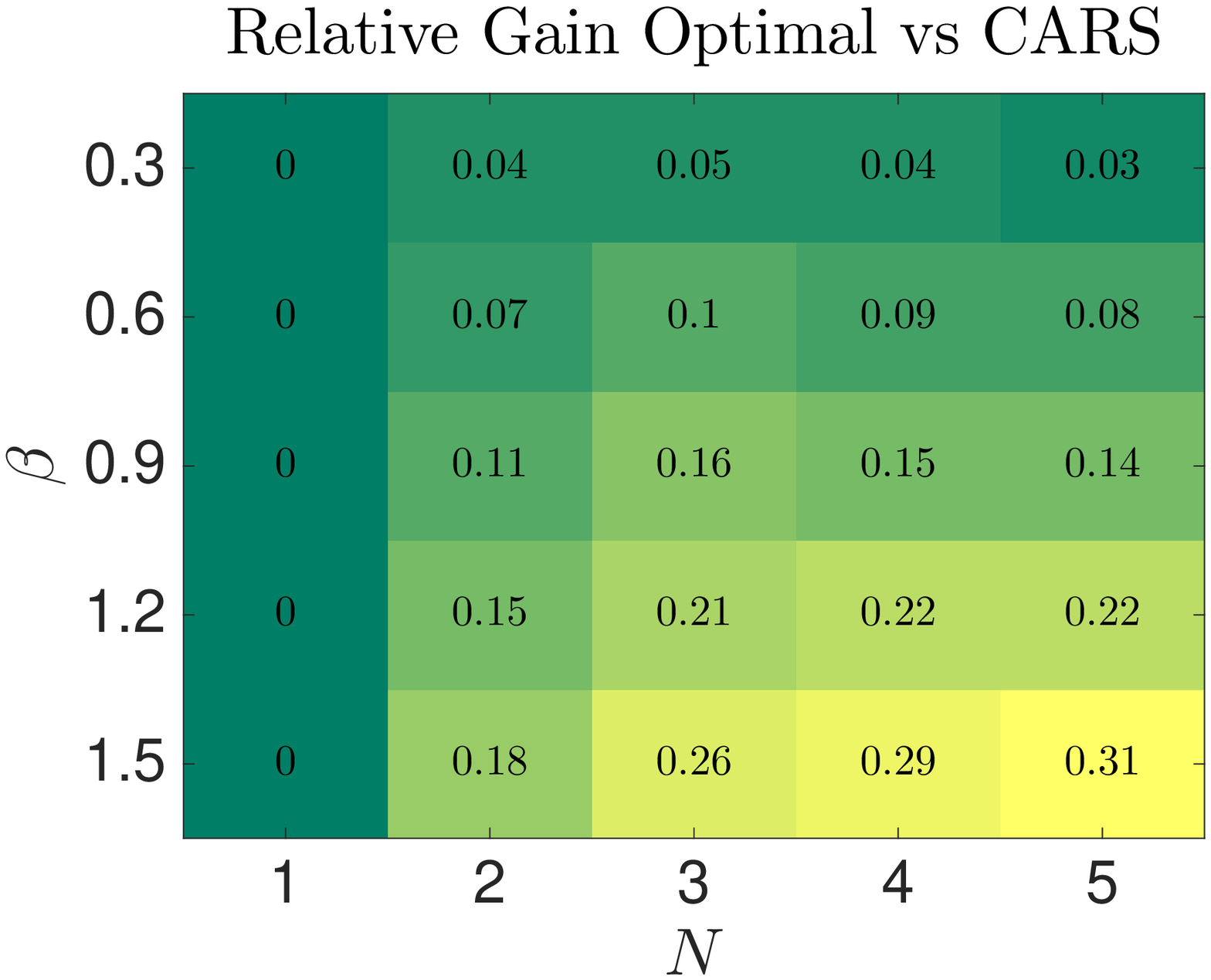}\label{fig:heatmap}}
\hspace{0.08\columnwidth}
\subfigure[Absolute Perf. ($q = 90\%,~s = 0.6,~MPH=11.24\%$)]{\includegraphics[width=0.4\columnwidth]{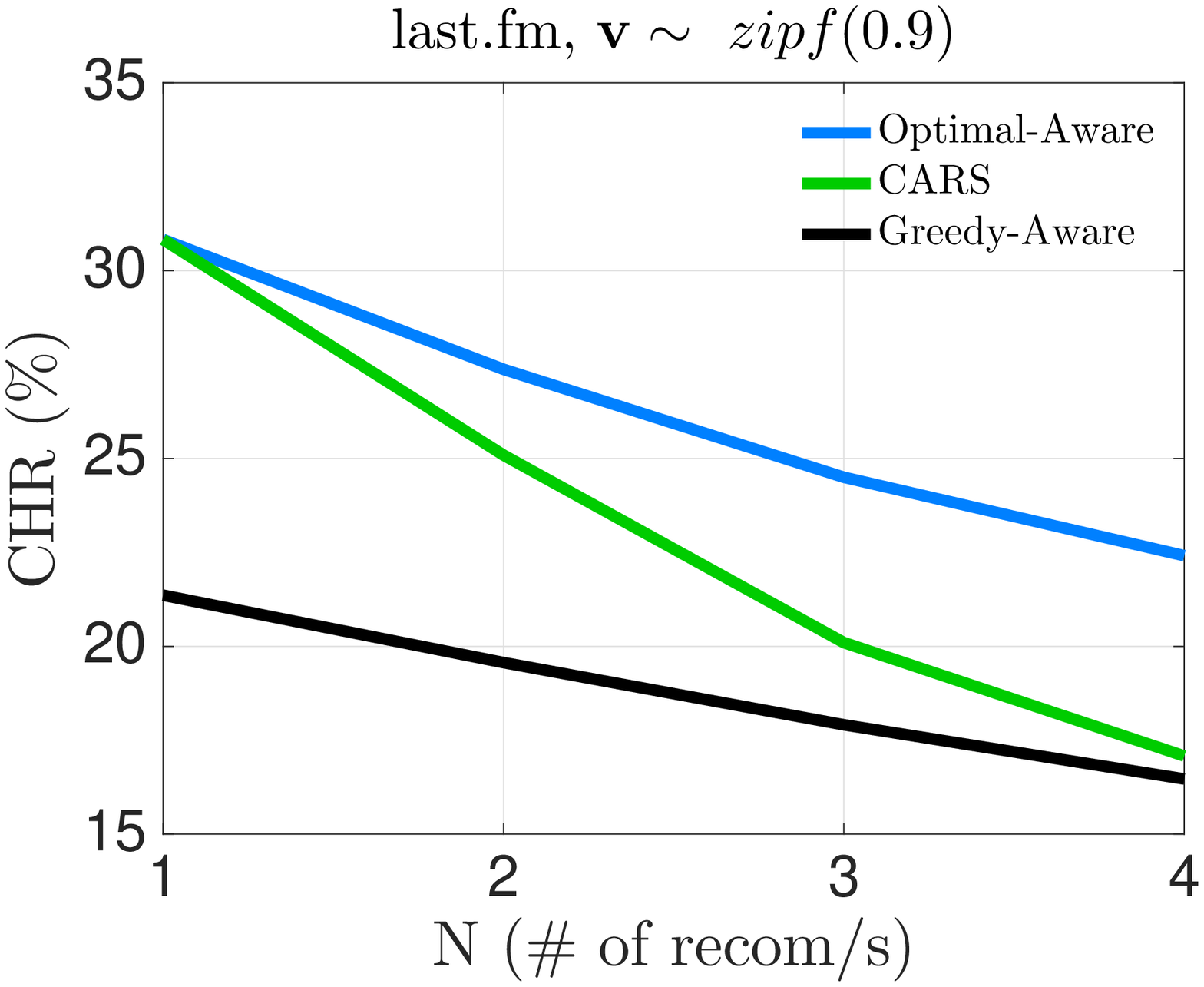}\label{fig:sensitivity-nb-of-recs}}
\vspace{-2.5mm}
\caption{(a:) Relative Gain vs $(N,\beta)$ and (b:) Cache Hit Rate vs~$N$ ~($C/K\approx 1.00\%,~\alpha=0.7$)}
\end{figure}

\myitem{\emph{Optimal} vs \emph{Greedy}.}
The second question we study is: \emph{How would a simpler greedy/myopic, yet position-aware, algorithm fare against our proposed method?} Fundamentally, the \emph{Greedy} algorithm solves a less constrained problem than \textbf{OP~\ref{problem:basis}}, and is therefore a more lightweight option in terms of execution time. However, the merits of using the proposed optimal method are noticeable in  Figs.~\ref{fig:opt-greedy-mvlns},~\ref{fig:opt-greedy-relative-all} (parameters in Table~\ref{table:parameters}). In all three datasets, we see an impressive improvement, between $20-60\%$. 

\myitem{Observation 2.} The constant relative gain of the two \emph{aware} algorithms hints that both, as the entropy increases, seem to do the right placement in the positions. However, as \emph{Greedy} decides with a small horizon, it cannot build the correct \emph{long} paths that lead to higher gains in the following requests (clicks) of the user.



Lastly, we investigate the sensitivity of the three methods against the number of recommendations ($N$)
.
In Fig.~\ref{fig:sensitivity-nb-of-recs}, we present the \emph{CHR} curves of all  three schemes for increasing $N$, where we keep constant the distribution $\mathbf{v} \sim zipf(0.9)$. As expected, for $N=1$ (e.g., YouTube autoplay scenario) \emph{CARS} and the proposed scheme coincide, as there is no flexibility in having \emph{only one} recommendation. However, as $N$ increases, \emph{CARS} and \emph{Greedy} decay at a much faster pace than the proposed scheme, which is more resilient to the increase of $N$. This leads to the following observation.

\myitem{Observation 3.} For large $N$, \emph{CARS} may offer the \say{correct} recommendations (cached or related or both), but it cannot place them in the right positions, as there are now too many available spots. 
In contrast, our algorithm \textit{Optimal} recommends the \say{correct} contents, and places the recommendations in the \say{correct} positions. 
Fig.~\ref{fig:heatmap}, strengthens even more the Observation 3; its key conclusion is that with high enough enough $\beta$ (i.e. low $H_{\mathbf{v}}$) and more than 2 or 3 recommendations, while \emph{CARS} aims to solve the multiple access problem, its \emph{position preference unawareness} leads to suboptimal recommendation placement, and thus severe drop of its \emph{CHR} performance compared to the \emph{Optimal}.



\section{Related Work}
\label{sec:related}
\myitem{RS and Caching Interplay.} The relation between RS and caching has only recently been considered~\cite{cache-centric-video-recommendation,chatzieleftheriou2017technical,content-recommendation-swarming,sermpezis2018soft,kastanakis-cabaret-mecomm,song2018making,lin2018joint,sch-chants-2016,liu2018learning}. 
Closer to our study, 
\cite{chatzieleftheriou2017technical} considers the joint problem of caching and recommendations, placing the most popular contents (among all users) in a cache and then trying to bias recommendations to favor cached contents, taking into account position preference in their model. However, this is applied to a different setup than ours (no markovian traversal of content graph); furthermore, they do not provide any simulation results on the impact of position preference.
The work in~\cite{giannakas2018show} tackles recommendations for users consuming multiple contents in a row, as we do. However,~\cite{giannakas2018show} formulates a nonconvex problem, and proposes a heuristic algorithm, 
and does not have optimality guarantees. 

\myitem{Optimization Methodology.} The problem of optimal recommendations for multi-content sessions, bares some similarity with PageRank manipulation~\cite{fercoq2013ergodic,avrachenkov2006effect,ermon2014designing}. The idea there is to choose the links of a subset of pages (the user has access to) with the intention to increase the PageRank of some targeted web page(s). Although that problem is generally hard, some versions of the problem can also be convexified~\cite{fercoq2013ergodic}. 


%
\section{Conclusions}
\label{sec:conclusions}
This work has proposed the optimal solution for network-friendly position aware recommendations. This technique can be used offline in a data-center of the content provider for network cost minimization.

Nevertheless, the area is still in its infancy. Theoretical and experimental research is needed to refine user behavior models and metrics, as well as dynamic learning and optimization of system parameters (e.g., user's reactivity to modified recommendations) or even where the ``network-friendly'' content appears in the recommendation list. Finally, jointly optimizing recommendations together with caching decisions (as in previous works~\cite{chatzieleftheriou2017technical}, but now  for multi-content sessions) is a key future step. 


\begin{thebibliography}{10}

\bibitem{femto}
N.~Golrezaei, K.~Shanmugam, A.~G. Dimakis, A.~F. Molisch, and G.~Caire,
  ``Femtocaching: Wireless video content delivery through distributed caching
  helpers,'' in {\em Proc. IEEE INFOCOM}, 2012.

\bibitem{borst2010}
S.~Borst, V.~Gupta, and A.~Walid, ``Distributed caching algorithms for content
  distribution networks,'' in {\em Proc. IEEE INFOCOM}, 2010.

\bibitem{Paschos-misconceptions}
G.~S. Paschos, E.~Bastug, I.~Land, G.~Caire, and M.~Debbah, ``Wireless caching:
  Technical misconceptions and business barriers,'' {\em IEEE Communications
  Magazine}, vol.~54, no.~8, pp.~16--22, 2016.

\bibitem{ElayoubiRoberts15}
S.~Elayoubi and J.~Roberts, ``Performance and cost effectiveness of caching in
  mobile access networks,'' in {\em Proc. ACM ICN}, 2015.

\bibitem{ji2016fundamental}
M.~Ji, G.~Caire, and A.~F. Molisch, ``Fundamental limits of caching in wireless
  d2d networks,'' {\em IEEE Trans. on Information Theory}, vol.~62, no.~2,
  pp.~849--869, 2016.

\bibitem{karamshuk2016take}
D.~Karamshuk, N.~Sastry, M.~Al-Bassam, A.~Secker, and J.~Chandaria, ``Take-away
  tv: Recharging work commutes with predictive preloading of catch-up tv
  content,'' {\em IEEE JSAC}, vol.~34, no.~8, pp.~2091--2101, 2016.

\bibitem{chatzieleftheriou2017technical}
L.~E. Chatzieleftheriou, M.~Karaliopoulos, and I.~Koutsopoulos, ``Jointly
  optimizing content caching and recommendations in small cell networks,'' {\em
  IEEE Trans. on Mobile Computing}, vol.~18, no.~1, pp.~125--138, 2019.

\bibitem{cache-centric-video-recommendation}
D.~K. Krishnappa, M.~Zink, C.~Griwodz, and P.~Halvorsen, ``Cache-centric video
  recommendation: an approach to improve the efficiency of youtube caches,''
  {\em ACM TOMM}, vol.~11, no.~4, p.~48, 2015.

\bibitem{giannakas2018show}
T.~Giannakas, P.~Sermpezis, and T.~Spyropoulos, ``Show me the cache: Optimizing
  cache-friendly recommendations for sequential content access,'' {\em Proc.
  IEEE WoWMoM, 2018 ( arXiv:1805.06670)}, 2018.

\bibitem{content-recommendation-swarming}
D.~Munaro, C.~Delgado, and D.~S. Menasch{\'e}, ``Content recommendation and
  service costs in swarming systems,'' in {\em Proc. IEEE ICC}, 2015.

\bibitem{sermpezis2018soft}
P.~Sermpezis, T.~Giannakas, T.~Spyropoulos, and L.~Vigneri, ``Soft cache hits:
  Improving performance through recommendation and delivery of related
  content,'' {\em IEEE JSAC}, 2018.

\bibitem{kastanakis-cabaret-mecomm}
S.~Kastanakis, P.~Sermpezis, V.~Kotronis, and X.~Dimitropoulos, ``{CABaRet}:
  Leveraging recommendation systems for mobile edge caching,'' in {\em Proc.
  ACM SIGCOMM Workshops}, 2018.

\bibitem{what-should-you-cache-nossdav}
D.~K. Krishnappa, M.~Zink, and C.~Griwodz, ``What should you cache?: a global
  analysis on youtube related video caching,'' in {\em Proc. ACM NOSSDAV
  Workshop}, pp.~31--36, 2013.

\bibitem{RecImpact-IMC10}
R.~Zhou, S.~Khemmarat, and L.~Gao, ``The impact of youtube recommendation
  system on video views,'' in {\em In Proc. of ACM IMC 2010}.

\bibitem{businessYoutubeSessions}
``{Google spells out how YouTube is coming after TV}.''
  \url{http://www.businessinsider.fr/us/google-q2-earnings-call-youtube-vs-tv-2015-7/}.

\bibitem{poularakis2014toc}
K.~Poularakis, G.~Iosifidis, and L.~Tassiulas, ``Approximation algorithms for
  mobile data caching in small cell networks,'' {\em IEEE Trans. on
  Communications}, vol.~62, no.~10, pp.~3665--3677, 2014.

\bibitem{sarwar2001item}
B.~Sarwar, G.~Karypis, J.~Konstan, and J.~Riedl, ``Item-based collaborative
  filtering recommendation algorithms,'' in {\em Proc. WWW}, 2001.

\bibitem{covington2016deep}
P.~Covington, J.~Adams, and E.~Sargin, ``Deep neural networks for youtube
  recommendations,'' in {\em Proc. ACM RecSys}, pp.~191--198, 2016.

\bibitem{grinstead2012introduction}
C.~M. Grinstead and J.~L. Snell, {\em Introduction to probability}.
\newblock American Mathematical Soc., 2012.

\bibitem{mor2013}
M.~Harchol-Balter, {\em Performance Modeling and Design of Computer Systems:
  Queueing Theory in Action}.
\newblock Cambridge Univ. Press, 2013.

\bibitem{boyd2004convex}
S.~Boyd and L.~Vandenberghe, {\em Convex optimization}.
\newblock Cambridge university press, 2004.

\bibitem{ermon2014designing}
S.~Ermon, C.~P. Gomes, A.~Sabharwal, and B.~Selman, ``Designing fast absorbing
  markov chains.,'' in {\em Proc. AAAI}, pp.~849--855, 2014.

\bibitem{park2017general}
J.~Park and S.~Boyd, ``General heuristics for nonconvex quadratically
  constrained quadratic programming,'' {\em preprint arXiv:1703.07870}, 2017.

\bibitem{boyd2011distributed}
S.~Boyd {\em et~al.}, ``Distributed optimization and statistical learning via
  the alternating direction method of multipliers,'' {\em Foundations and
  Trends{\textregistered} in Machine Learning}, vol.~3, no.~1, pp.~1--122,
  2011.

\bibitem{adamic2002zipf}
L.~A. Adamic and B.~A. Huberman, ``Zipf's law and the internet.,''

\bibitem{lastfm}
``{https://labrosa.ee.columbia.edu/millionsong/lastfm}.''

\bibitem{movielens}
``{https://grouplens.org/datasets/movielens}.''

\bibitem{song2018making}
L.~Song and C.~Fragouli, ``Making recommendations bandwidth aware,'' {\em IEEE
  Trans. on Inform. Theory}, vol.~64, no.~11, pp.~7031--7050, 2018.

\bibitem{lin2018joint}
Z.~Lin and W.~Chen, ``Joint pushing and recommendation for susceptible users
  with time-varying connectivity,'' in {\em Proc. IEEE GLOBECOM}, pp.~1--6,
  IEEE, 2018.

\bibitem{sch-chants-2016}
T.~Spyropoulos and P.~Sermpezis, ``Soft cache hits and the impact of
  alternative content recommendations on mobile edge caching,'' in {\em Proc.
  ACM CHANTS workshop}, pp.~51--56, 2016.

\bibitem{liu2018learning}
D.~Liu and C.~Yang, ``A learning-based approach to joint content caching and
  recommendation at base stations,'' {\em arXiv preprint arXiv:1802.01414},
  2018.

\bibitem{fercoq2013ergodic}
O.~Fercoq, M.~Akian, M.~Bouhtou, and S.~Gaubert, ``Ergodic control and
  polyhedral approaches to pagerank optimization,'' {\em IEEE Trans. on
  Automatic Control}, vol.~58, no.~1, pp.~134--148, 2013.

\bibitem{avrachenkov2006effect}
K.~Avrachenkov and N.~Litvak, ``The effect of new links on google pagerank,''
  {\em Stochastic Models}, vol.~22, no.~2, pp.~319--331, 2006.

\end{thebibliography}



\end{document}